\def\maxwidth{ %
  \ifdim\Gin@nat@width>\linewidth
    \linewidth
  \else
    \Gin@nat@width
  \fi
}
\definecolor{fgcolor}{rgb}{0.345, 0.345, 0.345}
\definecolor{shadecolor}{rgb}{.97, .97, .97}
\definecolor{messagecolor}{rgb}{0, 0, 0}
\definecolor{warningcolor}{rgb}{1, 0, 1}
\definecolor{errorcolor}{rgb}{1, 0, 0}
\newenvironment{knitrout}{}{} 
\newtheorem{prop}{Proposition}
\newtheorem{remark}{Remark}
\title{Identification and Scaling of Latent Variables in Ordinal Factor Analysis}
\author{Edgar C. Merkle \and Sonja D. Winter \and Ellen Fitzsimmons}
\affiliation{University of Missouri}
\abstract{Social science researchers are generally accustomed to treating ordinal variables as though they are continuous. In this paper, we consider how identification constraints in ordinal factor analysis can mimic the treatment of ordinal variables as continuous. We specifically describe model constraints that lead to latent variable predictions equaling the average of ordinal variables. This result leads us to propose minimal identification constraints, which we call {\em integer constraints}, that place the latent variables on the scale of the observed, integer-coded ordinal variables. The integer constraints lead to intuitive model parameterizations because researchers are already accustomed to thinking about ordinal variables as though they are continuous. We provide a proof that our proposed integer constraints are indeed minimal identification constraints, as well as illustrations of how integer constraints work with real data. We also provide simulation results indicating that integer constraints are similar to other identification constraints in terms of estimation convergence and admissibility.}
\shorttitle{Identification of Ordinal CFA}
\let\proglang=\textsf
\let\pkg=\emph
\begin{document}
\maketitle

In factor analysis and related models of ordinal observed variables, we commonly assume that latent variables follow a normal distribution with mean 0 and variance 1. These constraints have computational advantages that can lead to efficiency in model estimation. Separately from identification constraints, it is common practice for applied researchers to ignore that their observed variables are ordinal, summing or averaging the variables as though they are continuous \cite<e.g.,>{lidkru18,sij24}. In this paper, we propose identification constraints that are related to averaging ordinal variables as though they are continuous. This can make the model parameters more intuitive to applied researchers, as compared to the usual identification constraints.

Many researchers have studied when and whether we can treat ordinal variables as continuous \cite<e.g.,>{bolbar81,burvuo19,lidkru18,mcn20,rhebro12,winmar84}. Perhaps the most famous work on this topic is Stevens' scales of measurement \cite<e.g.,>{ste46}. In distinguishing between ordinal scales and interval scales, Stevens notes that ``means and standard deviations computed on an ordinal scale are in error to the extent that the successive intervals on the scale are unequal in size'' (p.\ 679). Our results below involve the idea of equal intervals in ordinal CFA models, providing minimal identification constraints that are related to equal intervals. Our results are also related to those of Kruschke and colleagues \cite{kru14,kru15,lidkru18}, who considered identification constraints for univariate, ordinal regression models. They reasoned that, because applied researchers are accustomed to treating ordinal variables as continuous, we should seek to identify the ordinal regression model so that the underlying continuous variable is on the scale of the ordinal variable. For example, if we have an ordinal variable with 5 categories, then the ordinal regression model should generally predict values between 1 and 5 on the latent continuous scale, which are then converted to probabilities of assuming each ordered category.

In the pages below, we formalize the above arguments by first providing background on the specific models and identification constraints that we consider. We then study how ordinal CFA models can be constrained so that the latent variable predictions equal the average of the ordinal variables (where we treat the ordinal variables as continuous). Next, we propose minimal identification constraints related to these ideas and illustrate them via example and simulation. Finally, we consider limitations and future directions. The supplementary material includes code showing how our proposed integer constraints can be implemented in {\em lavaan} \cite{ros12} and in {\em mirt} \cite{cha12}.

\section{Theoretical Background}
We assume data vectors $\bm{y}_i$ of length $p$, $i=1,\ldots,n$, where all $p$ variables are ordinal with $K$ categories. Under the traditional probit link function, we can conceptualize continuous, latent data vectors $\bm{y}^\ast_i$ that are chopped to yield the observed, ordinal data. For example, for $K=4$, the chopping can be written as
\begin{align*}
y_{ij} = 1 &\text{ if }-\infty <\ y^*_{ij} < \tau_{j1} \\
y_{ij} = 2 &\text{ if }\tau_{j1} <\ y^*_{ij} < \tau_{j2} \\
y_{ij} = 3 &\text{ if }\tau_{j2} <\ y^*_{ij} < \tau_{j3} \\
y_{ij} = 4 &\text{ if }\tau_{j3} <\ y_{ij}^* <\ \infty,
\end{align*}
where $\tau_{j1} < \tau_{j2} < \tau_{j3}$ are the threshold parameters for item $j$.

The CFA model is placed on the $\bm{y}^\ast_i$ as if we had observed, continuous data:
\begin{align}
    \bm{y}^\ast_i &= \bm{\nu} + \bm{\Lambda} \bm{\eta}_i + \bm{\delta}_i \\
    \bm{\eta}_i &\sim \text{N}(\bm{\kappa}, \bm{\Phi}) \label{eq:lvdist} \\
    \bm{\delta}_i &\sim \text{N}(\bm{0}, \bm{\Theta})
\end{align}
where $\bm{\nu}$ is $p \times 1$, $\bm{\Lambda}$ is $p \times m$, $\bm{\eta}_i$ is $m \times 1$, and $\bm{\delta}_i$ is $p \times 1$. We further assume that $\bm{\Theta}$ is diagonal and that $\bm{\Lambda}$ has a {\em clustered} structure, i.e., that each observed variable only loads on one factor. Regarding the latter assumption, we could alternatively say that the factor complexity of each observed variable equals 1 or that each row of $\bm{\Lambda}$ has only one nonzero entry.

Given $\bm{\eta}_i$, the probability that $Y_{ij}$ assumes each category is the area of the normal distribution between two thresholds, i.e.,
\begin{equation}
  \label{eq:condlik}
P(Y_{ij} = y_{ij} \mid \bm{\eta}_i, \bm{\xi}) = \Phi \left (\frac{\tau_{j,y_{ij}} - (\nu_j + \bm{J}_j \bm{\Lambda} \bm{\eta}_i)}{\theta_{ii}} \right ) - \Phi \left (\frac{\tau_{j,(y_{ij}-1)} - (\nu_j + \bm{J}_j \bm{\Lambda} \bm{\eta}_i)}{\theta_{ii}} \right ),
\end{equation}
where $\Phi()$ is the standard normal cumulative distribution function, $\bm{J}_j$ is a $1 \times p$ vector with an entry of 1 in position $j$ and 0 elsewhere, $\bm{\xi}$ is a vector of item parameters, and $\tau_{j0} = -\infty$ and $\tau_{jK} = \infty$ for all $j$. The conditional model likelihood for respondent $i$ (conditioned on the latent variables $\bm{\eta}_i$) can then be written as:
\begin{equation}
  \label{eq:lik}
L(\bm{\xi} | \bm{y}_i, \bm{\eta}_i) = \prod_{j=1}^p \prod_{k=1}^K P(Y_{ij} = k \mid \bm{\eta}_i, \bm{\xi})^{u_{ijk}},
\end{equation}
where $u_{ijk}$ equals 1 if person $i$ responded to question $j$ with the $k$th ordered category and 0 otherwise. For model estimation, the marginal likelihood is often used instead of the above likelihood, where the $\bm{\eta}_i$ are integrated out. This integration requires approximation via quadrature or other numerical methods \cite<e.g.,>{tuerij06}. Alternatively, researchers often obtain the polychoric correlations between ordinal variables and fit the traditional CFA model via weighted least squares \cite<e.g.,>{mut84}. The latter approach is fast because it avoids numerical integration, capitalizing on the equivalence between IRT and CFA \cite<e.g.,>{takdel87}.

\subsection{Identification Constraints}
Additional constraints are necessary to identify model parameters. For example, a common set of constraints are:
\begin{equation}
    \label{eq:tradid}
\text{diag}(\bm{\Phi}) = \bm{1},\ \bm{\kappa} = \bm{0},\ \bm{\nu} = \bm{0},\ \bm{\Theta} = \bm{I},
\end{equation}
where the restriction on $\bm{\Phi}$ is sometimes called a ``unit variance constraint.'' A variation involves fixing one loading per latent variable to 1, instead of fixing each diagonal entry of $\bm{\Phi}$ to be 1. This shifts the constraints on $\bm{\Phi}$ to constraints on $\bm{\Lambda}$, and is sometimes called a ``reference indicator constraint.'' Another variation for ordinal CFA involves the so-called ``delta parameterization,'' where the constraints on $\bm{\Theta}$ are replaced with constraints on the model-implied covariance matrix of $\bm{y}^\ast$:
\begin{equation}
\text{diag}(\bm{\Lambda \Phi \Lambda}^\prime + \bm{\Theta}) = \bm{1}.
\end{equation}
These sets of constraints lead to equivalent, equal-fitting models whose parameter estimates can be transformed to one another. While the specific choice of constraints is often regarded as arbitrary \cite<e.g.,>{bollil24}, it is worth mentioning that different sets of constraints sometimes lead to differing conclusions regarding parameter equality \cite{kloklo18,klo23,ste02} and regarding Bayesian model selection \cite{gramer22}.

\subsection{Latent Variable Prediction}
Following model estimation via marginal maximum likelihood or weighted least squares, researchers may optionally request latent variable predictions that serve as scores for each individual $i$. There is a large history of literature discussing the indeterminacy of factor scores \cite<see, e.g.,>[for a summary]{wal22}, where the indeterminacy is discussed in the context of estimating the $\bm{\eta}_i$ jointly with the $\bm{\delta}_i$. To obtain unique predictions of the $\bm{\eta}_i$, a reasonable thing to do (which is also common practice) is to marginalize over the $\bm{\delta}_i$ while addressing sign indeterminacy and rotational indeterminacy via parameter constraints. This is similar to the situation that \citeA{rhesav25} recently considered for continuous $\bm{y}_i$.

For ordinal factor analysis, we can obtain latent variable predictions by maximizing the likelihood function $L(\bm{\eta}_i \mid \bm{y}_i, \bm{\xi})$ for all $i$, where the likelihood function has the same form as the right side of Equation~\eqref{eq:lik}. As compared to Equation~\eqref{eq:lik}, we now estimate $\bm{\eta}_i$ and condition on $\bm{\xi}$ whereas we previously did the opposite. Maximization of this function requires numerical methods because it involves the normal CDF.

Maximum likelihood estimates of the $\bm{\eta}_i$ do not exist for extreme response patterns consisting of all 1s or $K$s. Consequently, it is common practice (for IRT as well as generalized linear mixed models) to multiply the likelihood function by the ``prior'' distribution from Equation~\eqref{eq:lvdist}, which leads us to maximize the posterior distribution of each $\bm{\eta}_i$. The resulting estimates of the $\bm{\eta}_i$ are called the {\em maximum a posteriori} (MAP) estimates. In situations where we have already estimated the Equation~\eqref{eq:lvdist} parameters and hold them fixed, we may also refer to our estimates of the $\bm{\eta}_i$ as {\em empirical Bayes} estimates. Further detail about these procedures can be found in, e.g., \citeA{bakkim04}.

\section{Parameter Constraints and Sum Scores}
It is customary for applied researchers to ignore the fact that their variables are ordinal and to sum or average the ordinal variables associated with each latent variable. This commonly happens by assigning the lowest category a value of 1 and the highest category a value of $K$, then averaging. The average serves as a summary score for each participant that can be used in regressions and other models. We now discuss how the latent variable predictions from an ordinal CFA model can mimic the average of observed ordinal variables. This will lead us to develop alternative identification constraints in later sections.

\subsection{Constraints}
Consider the ordinal CFA model from the previous section, where all free loadings are fixed at 1, $\bm{\kappa} = (\frac{K + 1}{2}) \bm{1}$, where $\bm{1}$ is an $m \times 1$ vector, and $( \tau_{j1}, \tau_{j2}, \ldots, \tau_{j(K-1)}) = (1.5,\ 2.5,\ \ldots,\ (K - .5))$. 
Under these constraints, we have a Rasch-like model, and the items are interchangeable because the loadings and thresholds are identical across items. For such a model, \citeA{and77} shows that the sum of individual $i$'s responses is a sufficient statistic for $\bm{\eta}_i$ \cite<also see>{and78,lord53}. \citeA{sam69} additionally shows that the maximum of the item response function for response category $k$ occurs at the midpoint between that category's threshold parameters (see her Equation 5.6), for $k = 2, \ldots, (K-1)$. Our restrictions on thresholds imply that the mode occurs at the integer value that applied researchers often assign to ordinal variables. Thus, we claim that the MAP estimates of the $\bm{\eta}_i$ are equal to the average of observed ordinal responses (where the responses are coded as integers starting from 1). But further clarification is needed for the extreme categories of 1 and $K$, which we provide in the next section.

\subsection{Empirical Results}
As described previously, the latent variable predictions involve maximization of Equation~\eqref{eq:lik}, which is now a function of $\bm{\eta}_i$ and is conditioned on $\bm{\xi}$ along with $\bm{y}_i$. 
To show that the model constraints from the previous section lead to latent variable predictions equaling the average of observed variables, we consider here a one-factor model with values of $p$ from 2 to 10 and $K=5$. For each value of $p$, we generate all possible response patterns and calculate the MAP prediction of the latent variable for each response pattern. We do not estimate any item parameters here: in addition to the constraints on the loadings, thresholds, and latent means from the previous subsection, we fixed $\bm{\nu} = \bm{0}$, $\bm{\Phi} = p$, and $\bm{\Theta} = \bm{I}$.

Figure~\ref{fig:mllv} shows scatter plots of the average observed response (x-axis) versus MAP latent variable prediction for all possible response patterns. Each red point is a response pattern that does not include an extreme response of 1 or 5, while each blue point is a response pattern that does include an extreme. The figure shows that the points generally fall along the diagonal, with some differences at the far left and far right side of each panel. This provides some evidence that latent variable predictions under our model constraints remain close to the average of observed variables for all response patterns. The supplementary materials include additional code that considers additional values of $p$ and $K$. It shows that the gradient of the likelihood function is always close to 0 at the mean of observed variables, so long as the response pattern does not include extreme responses of 1 or $K$. The code also considers maximum likelihood estimates of the latent variables, in addition to MAP predictions.

When an individual's response pattern does include the extremes of 1 or $K$, the $\bm{\eta}_i$ predictions are pulled toward $-\infty$ or $+\infty$, respectively, so that they no longer equal the average of the observed variables. This can be observed on the left and right sides of each panel of Figure~\ref{fig:mllv}. 
A similar phenomenon happens for large values of $K$ (say 8 or more) when responses are near the extremes (e.g., 2 or $(K-1)$).
In the MAP case, the prior distribution from~\eqref{eq:lvdist} helps keep the predictions from straying too far from the observed average. We fixed the prior variance, $\bm{\Phi}$, to equal $p$ in each panel of these results. This may appear to be an odd choice, but it is used here to demonstrate robustness of our result. This is because a prior variance of $p$ is weaker than the traditional prior variance of 1. Were we to fix $\bm{\Phi}$ to 1, our points would be even closer to the diagonal.
And in the maximum likelihood case, an adhoc, vague prior distribution is often used to ensure that latent variable predictions exist for extreme response patterns. In each case, the resulting latent variable predictions are close to the means of the integer-coded ordinal variables.

\begin{figure}
  \caption{Observed averages versus MAP latent variable estimates for $K=5$ and $p=2$ to 10. Each point represents a response pattern. Red points are response patterns that do not include a response of 1 or 5, and blue points are response patterns that do include a response of 1 and/or 5.}
  \label{fig:mllv}
\begin{knitrout}\footnotesize
\definecolor{shadecolor}{rgb}{0.969, 0.969, 0.969}\color{fgcolor}

{\centering \includegraphics[width=4in,height=4in]{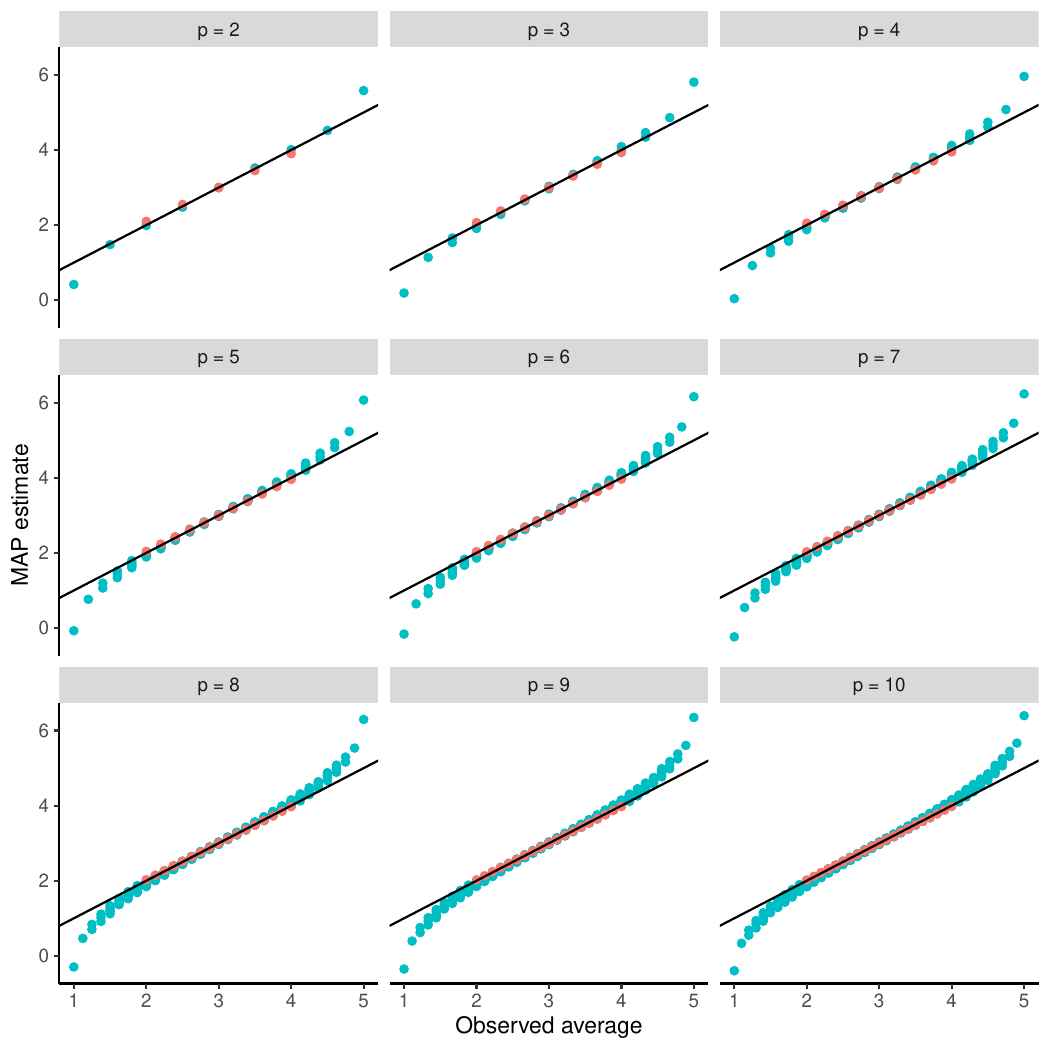} 

}

\end{knitrout}
\end{figure}

\subsection{Summary}
We have shown that under a highly-constrained ordinal CFA model, the MAP predictions of the latent variables are highly related to the integer-coded average of observed variables. This was anticipated by \citeA{and77} and \citeA{sam69}, though the connection to treating ordinal variables as continuous was perhaps not fully clarified or appreciated. For example, about twenty years after these works, \citeA{ste94} states, ``My strong hunch is that, if scales were developed using unit weighting on the basis of ordinary component analysis, and these scale scores were used instead of individual items, that there would be no need for special techniques for categorical variables, because the resulting scores would be `close enough' to continuous variates'' (p.\ 218). Our result is also similar to results of \citeA{folgro22}, who show that equally-spaced thresholds can lead integer-coded correlations to match polychoric correlations (especially see their Corollary 1).
But the model described in this section is too highly constrained to be useful in many practical situations, so we next consider minimal identification constraints.

\section{Minimal Identification Constraints in Ordinal CFA}
Although researchers nearly always identify ordinal CFA models via some variation of the constraints in~\eqref{eq:tradid}, there exist an infinite number of possible identification constraints. We would like a set of identification constraints that get us closer to the highly-constrained model from the previous section, where latent variable predictions are averages of observed variables. 

To move in this direction, we consider the \citeA{wuest16} matrix expressions that transform parameter estimates under one set of constraints to parameter estimates under another set of constraints. Their expressions are
\begin{align}
\label{eq:tran1}
  \widetilde{\bm{T}} &= \bm{\gamma 1}^\prime + \bm{\Delta}^{-1}\bm{T} \\
\label{eq:tran2}
  \widetilde{\bm{\Lambda}} &= \bm{\Delta}^{-1}\bm{\Lambda D} \\
\label{eq:tran3}
  \widetilde{\bm{\nu}} &= \bm{\Delta}^{-1}\bm{\nu} + \bm{\Delta}^{-1}\bm{\Lambda \beta} + \bm{\gamma} \\
\label{eq:tran4}
  \widetilde{\bm{\Theta}} &= \bm{\Delta}^{-1} \bm{\Theta} \bm{\Delta}^{-1} \\
\label{eq:tran5}
  \widetilde{\bm{\kappa}} &= \bm{D}^{-1} (\bm{\kappa} - \bm{\beta}) \\
  \label{eq:tran6}
  \widetilde{\bm{\Phi}} &= \bm{D}^{-1} \bm{\Phi} \bm{D}^{-1},
\end{align}
where $\bm{T}$ is a $p \times (K-1)$ matrix whose rows each contain the thresholds for one observed variable, and $\bm{D}$, $\bm{\Delta}$, $\bm{\beta}$, and $\bm{\gamma}$ are the transformation matrices and vectors. The $\bm{D}$ and $\bm{\Delta}$ matrices are positive, diagonal matrices of dimension $m \times m$ and $p \times p$, respectively. The $\bm{\beta}$ and $\bm{\gamma}$ vectors are of dimension $m \times 1$ and $p \times 1$, respectively.

Ordinal CFA parameter identification amounts to defining a minimal set of parameter constraints that fix the four transformation matrices and vectors described above, such that the constraints hold on both the left and right sides of Equations~\eqref{eq:tran1}--\eqref{eq:tran6} \cite<also see>[Proposition 1]{wuest16}. For example, consider the identification constraints from~\eqref{eq:tradid}. These constraints require that $\bm{D} = \bm{I}$, $\bm{\beta} = \bm{0}$, $\bm{\gamma} = \bm{0}$, and $\bm{\Delta} = \bm{I}$. Below, we use the transformation matrices to develop alternative constraints.

\section{Alternative Identification Constraints}
Instead of fixing parameters to 0 or 1, we seek identification constraints that put the latent variable close to the integer scale of the ordinal variable. As we mentioned earlier, such constraints can be helpful to applied researchers who are working with ordinal data, because they are accustomed to thinking on the scale of the ordinal variable and to treating the ordinal variables as if they are continuous.

The constraints that we study are related to the constraints that led to factor scores mimicking observed averages. Instead of fixing $\bm{\nu}$ to $\bm{0}$, we require that the $\nu$ parameters associated with each latent variable sum to 0. Relatedly, instead of fixing a single loading to 1 or fixing the latent variance to 1, we constrain the loadings associated with each latent variable to average 1. This is reminiscent of the \citeA{little06} effect coding approach for continuous data. Finally, we fix the lower and upper thresholds of each observed variable to 1.5 and $K - 0.5$, respectively.

To formally describe the constraints, let $\mathcal{S}_q$ be the set of observed variables whose loadings in the $q$th column of $\Lambda$ are not fixed to 0 (i.e., the set of observed variables that ``load'' on latent variable $q$). Let $n_q$ be the cardinality of $\mathcal{S}_q$. Then our identification constraints can be written as
  \begin{align*}
    \displaystyle \sum_{j \in \mathcal{S}_q} \nu_j = 0\ \ &\forall\ q = 1, \ldots, m \\
    \frac{1}{n_q} \displaystyle \sum_{j \in \mathcal{S}_q} \lambda_{jq} = 1\ \ &\forall\ q = 1, \ldots, m \\
    \tau_{j1} = 1.5\ \ &\forall\ j = 1, \ldots, p \\
    \tau_{j(K-1)} = K - 0.5\ \ &\forall\ j = 1, \ldots, p.
  \end{align*}
  
Based on our previous arguments, these threshold restrictions help ensure that the scale of each latent variable is similar to that of the integer-coded ordinal items. Additionally, the latent variable means and variances are freely estimated, reflecting the standing of each latent variable on the ordinal scale. This helps applied researchers to understand and interpret the latent variable predictions, as well as other model parameters.

To show that the above constraints are minimal identification constraints, we first note that we have $2(p + m)$ individual constraints, which matches the number that was established by \citeA{wuest16}. To further establish these constraints, we make use of the Wu and Estabrook transformation matrices in the following proposition.

\begin{prop}
Let $\mathcal{S}_q$ be the set of observed variables whose loadings in the $q$th column of $\Lambda$ are not fixed to 0. Let $n_q$ be the cardinality of $\mathcal{S}_q$. Then the following are minimal identification constraints for an ordinal CFA model with clustered structure:
  \begin{align*}
    \displaystyle \sum_{j \in \mathcal{S}_q} \nu_j = 0\ \ &\forall\ q = 1, \ldots, m \\
    \frac{1}{n_q} \displaystyle \sum_{j \in \mathcal{S}_q} \lambda_{jq} = 1\ \ &\forall\ q = 1, \ldots, m \\
    \tau_{j1} = 1.5\ \ &\forall\ j = 1, \ldots, p \\
    \tau_{j(K-1)} = K - 0.5\ \ &\forall\ j = 1, \ldots, p.
  \end{align*}
\end{prop}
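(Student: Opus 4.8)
The plan is to invoke the Wu and Estabrook criterion cited just before the proposition: a set of constraints identifies the model precisely when the only transformation $(\bm{D}, \bm{\Delta}, \bm{\beta}, \bm{\gamma})$ carrying one constraint-satisfying parameter set to another through Equations~\eqref{eq:tran1}--\eqref{eq:tran6} is the trivial one, $\bm{D} = \bm{I}$, $\bm{\Delta} = \bm{I}$, $\bm{\beta} = \bm{0}$, $\bm{\gamma} = \bm{0}$. Since the number of constraints was already shown to equal $2(p+m)$, the dimension of the transformation family, establishing this uniqueness simultaneously delivers minimality. So I would suppose that both the untilded and the tilded parameters satisfy all four displayed constraints, and then show in sequence that each block of the transformation must be trivial.

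First I would exploit the threshold constraints to pin down $\bm{\Delta}$ and $\bm{\gamma}$. Writing $\delta_j$ for the $j$th diagonal entry of $\bm{\Delta}$, row $j$ of Equation~\eqref{eq:tran1} reads $\tilde{\tau}_{jk} = \gamma_j + \tau_{jk}/\delta_j$. Imposing $\tau_{j1} = \tilde{\tau}_{j1} = 1.5$ and $\tau_{j(K-1)} = \tilde{\tau}_{j(K-1)} = K - 0.5$ gives two linear equations; subtracting them cancels $\gamma_j$ and leaves $(K-2) = (K-2)/\delta_j$, forcing $\delta_j = 1$ whenever $K \geq 3$, and then $\gamma_j = 0$. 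Hence $\bm{\Delta} = \bm{I}$ and $\bm{\gamma} = \bm{0}$.

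With $\bm{\Delta} = \bm{I}$, Equation~\eqref{eq:tran2} collapses to $\widetilde{\bm{\Lambda}} = \bm{\Lambda}\bm{D}$, and here the clustered structure does the real work: because each item loads on a single factor, the $q$th diagonal entry $d_q$ of $\bm{D}$ multiplies every loading in $\mathcal{S}_q$, so $\sum_{j \in \mathcal{S}_q}\tilde{\lambda}_{jq} = d_q \sum_{j\in\mathcal{S}_q}\lambda_{jq}$. Dividing by $n_q$ and applying the loading constraint on both sides gives $d_q = 1$ for every $q$, i.e., $\bm{D} = \bm{I}$. Finally, substituting $\bm{\Delta} = \bm{I}$ and $\bm{\gamma} = \bm{0}$ into Equation~\eqref{eq:tran3} yields $\widetilde{\bm{\nu}} = \bm{\nu} + \bm{\Lambda}\bm{\beta}$; summing over $\mathcal{S}_q$, again using clustering so that only $\beta_q$ survives, and using $\sum_{j\in\mathcal{S}_q}\lambda_{jq} = n_q$, the intercept constraint forces $n_q \beta_q = 0$, hence $\bm{\beta} = \bm{0}$. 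That exhausts the transformation family, and the triangular order of elimination ($\bm{\Delta},\bm{\gamma}$ then $\bm{D}$ then $\bm{\beta}$) is what keeps each step a single linear solve.

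I expect the main obstacle to be structural bookkeeping rather than algebra: the arguments for $\bm{D}$ and $\bm{\beta}$ rely essentially on the factor-complexity-one (clustered) assumption to decouple the per-factor sums, and I would want to state carefully why the within-factor sum constraints stay functionally independent across factors, so that the $2(p+m)$ constraints are genuinely minimal rather than redundant (redundancy would mean $2(p+m)$ constraints cannot identify, contradicting the uniqueness just shown). A secondary point to flag is the degenerate case $K = 2$, where there is only one threshold per item and the subtraction step cannot separate $\delta_j$ from $\gamma_j$; the threshold constraints as written presuppose at least three categories, and I would note this scope restriction explicitly.
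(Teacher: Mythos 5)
Your first half --- pinning down the transformation --- is essentially the paper's own argument: use the two fixed thresholds per item in Equation~\eqref{eq:tran1} to force $\bm{\Delta} = \bm{I}$ and $\bm{\gamma} = \bm{0}$, then the average-1 loading constraint with Equation~\eqref{eq:tran2} and the clustered structure to force $\bm{D} = \bm{I}$, then the zero-sum intercept constraint with Equation~\eqref{eq:tran3} to force $\bm{\beta} = \bm{0}$. Your subtraction step making the cancellation of $\gamma_j$ explicit, and your flag that $K=2$ leaves only one threshold so the argument (and indeed the constraint set as written) presupposes $K \geq 3$, are both sound refinements; the paper's own back-transformation $\bm{\Delta}_{jj} = (\tau_{j(K-1)}-\tau_{j1})/(K-2)$ is likewise undefined at $K=2$, and the paper only patches the binary case later by additionally fixing $\nu_j = 0$.

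The gap is in how you handle minimality. You argue: the count is $2(p+m)$, which matches the dimension of the transformation family, and uniqueness of the trivial transformation rules out redundancy, so the constraints are minimal. But non-redundancy is not the same as attainability. A minimal identification constraint must also leave every equivalence class of parameters reachable --- i.e., for an arbitrary identified parameter set there must \emph{exist} a transformation carrying it into the constrained form, otherwise the constraints restrict the model (change its fit) rather than merely selecting a representative. Uniqueness plus matching dimensions does not deliver this: an injective map between spaces of equal dimension need not be surjective, so the orbit of a given parameter point could in principle miss the constraint surface entirely even though it meets it in at most one point. The paper closes this by explicitly constructing the transformation $\bm{\Delta}_{jj} = (\tau_{j(K-1)}-\tau_{j1})/(K-2)$, $\bm{D}_{kk} = n_k \bigl( \sum_{j \in \mathcal{S}_k} \delta^{-1}_{jj}\lambda_{jk} \bigr)^{-1}$, together with the corresponding $\bm{\beta}_k$ and $\bm{\gamma}_j$, that maps any traditionally-constrained solution onto the integer-constrained form; existence is thereby demonstrated rather than inferred. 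Your proposal would need either this construction or an equivalent surjectivity argument to be complete.
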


\begin{proof}
  By Proposition 1 of \citeA{wuest16}, we first show that the proposed constraints fix the transformation matrices from Equations~\eqref{eq:tran1}--\eqref{eq:tran6}. We then show that these constraints do not add additional parameter restrictions.

  For a particular observed variable $j$, the right side of Equation~\eqref{eq:tran1} involves scaling its thresholds by a positive constant $\delta_{jj}$ and then adding a constant $\gamma_j$. Considering these transformations, we must set $\gamma_j=0$ and $\delta_{jj}=1$ to maintain lower and upper threshold values of 1.5 and $(K-.5)$, respectively. This holds for all $j$, so we have $\bm{\gamma} = \bm{0}$ and $\bm{\Delta} = \bm{I}$.

  Next, we examine~\eqref{eq:tran2} with $\bm{\Delta} = \bm{I}$. The right side of this equation scales each column of $\bm{\Lambda}$ by a positive, diagonal entry of $\bm{D}$. But we already constrained the free entries in each column of $\bm{\Lambda}$ to average 1. The only way to maintain this constraint is to set $\bm{D} = \bm{I}$.

  Finally, we examine~\eqref{eq:tran3} with $\bm{\Delta} = \bm{I}$ and $\bm{\gamma} = \bm{0}$ and consider a particular latent variable $q$. To maintain the requirement that $\displaystyle \sum_{j \in \mathcal{S}_q} \widetilde{\nu}_j = 0$, we require that
\begin{equation*}
  \displaystyle \sum_{j \in \mathcal{S}_q} \lambda_{jq} \beta_q = - \displaystyle \sum_{j \in \mathcal{S}_q} \nu_j.
\end{equation*}
But we also have the restriction that $\displaystyle \sum_{j \in \mathcal{S}_q} \nu_j = 0$. So we must fix $\beta_q = 0$ for all $q$, i.e., $\bm{\beta} = \bm{0}$.
Now all four transformation matrices are fixed, establishing that these constraints resolve parameter indeterminacy.

  To show that the proposed constraints are minimal constraints required to identify the model parameters, we note that parameters identified under traditional constraints can be transformed to the proposed constraints. This is achieved via the following set of transformation matrices.
\begin{align*}
  \bm{\Delta}_{jj} &= (\tau_{j(K-1)} - \tau_{j1})/(K - 2) \ \ \forall\ j \\
  \bm{D}_{kk} &= n_k \left ( \sum_{j \in \mathcal{S}_k} \delta^{-1}_{jj} \lambda_{jk} \right )^{-1} \ \ \forall\ k \\
  \bm{\beta}_k &= -\left ( \displaystyle \sum_{j \in \mathcal{S}_k} \delta^{-1}_{jj} \lambda_{jk} \right )^{-1} \sum_{j \in \mathcal{S}_k} \left ( 1.5 + \delta^{-1}_{jj}(\nu_j - \tau_{j1}) \right )  \ \ \forall\ k \\
  \bm{\gamma}_j &= 1.5 - \delta^{-1}_{jj} \tau_{j1} \ \ \forall\ j.
\end{align*}
  
\end{proof}

The identification constraints proposed here are not the only ones that could be used. Following tradition, we could fix one loading per latent variable instead of requiring that loadings average 1. We could also add constraints on $\bm{\nu}$ and/or on $\bm{\Theta}$ and reduce the constraints on thresholds. We further discuss some of these alternatives in Appendix A. Our focal constraints appear to lead to the closest correspondence between integer-coded averages and latent variable predictions.

\citeA{lee90} discuss ideas related to our proposed constraints, identifying ordinal CFA models via constraints on thresholds \cite<also see>{lee07,shi98}. However, they do not consider the idea of placing the latent variables on the scale of the ordinal variables. In their example, they fix some thresholds to the maximum likelihood estimates of a previous study, where those estimates come from a model whose latent variables follow a standard normal distribution.

We now discuss some additional issues related to our proposed constraints.
\begin{remark}
The proposed identification constraints are minimial identification constraints. This means that, as compared to traditional identification constraints, the model fit and many other model summaries remain the same. In particular, standardized coefficients under the proposed constraints are equal to those obtained under traditional constraints.
\end{remark}
Remark 1 is especially noteworthy because some researchers are accustomed to reporting standardized coefficients. The proposed constraints have no impact on standardized coefficients, and it remains precarious to compare estimated coefficients across groups, standardized or otherwise. For example, although the latent variable means and variances are free under integer constraints, some of the thresholds are held equal across groups. Additionally, because we are not changing the fit of the model, model misfit and model misspecification are concerns for models with our proposed constraints, just as they are for models with traditional constraints. For example, \citeA{grofol24} recently considered how assumed normality of the $\bm{y}_i^\ast$ can bias the polychoric correlations that are used for weighted least squares estimation.

\begin{remark}
  To convert parameter estimates under alternative constraints (e.g., those from Proposition 1) to parameter estimates under the traditional constraints from Equation~\eqref{eq:tradid}, the transformation matrices are
  \begin{align*}
    \text{diag}(\bm{D}) &= \text{diag}(\bm{\Phi})^{1/2} \\
    \text{diag}(\bm{\Delta}) &= \text{diag}(\bm{I}) \\
    \bm{\beta} &= \bm{\kappa} \\
    \bm{\gamma} &= - \bm{\Theta}^{-1} (\bm{\nu} + \bm{\Lambda \kappa})
   \end{align*}
\end{remark}
This result is similar to the results of \citeA{klo21} for models of continuous variables, except that they are for models of ordinal variables.

In summary, Proposition 1 establishes that our proposed constraints address the model's parameter indeterminacy without introducing further restrictions. In the sections below, we first study whether the constraints cause problems with convergence of model estimation algorithms. We then illustrate how the proposed constraints work in two applied examples.

\section{Simulation Study}
We used a Monte Carlo simulation to ascertain that the proposed identification constraints do not affect model convergence, admissibility, or quality (as defined by the value of the model discrepancy function at the optimal estimates). We fit a variety of ordinal factor analysis models in \pkg{lavaan} using default options, to examine whether researchers using integer constraints are likely to encounter problems with model estimation.

\subsection{Method}
In the simulation study, we compared the proposed integer constraints to reference-marker constraints and to unit-variance constraints using a population model with three correlated factors. We varied attributes that are often included in latent variable simulation designs \cite<e.g.,>{gagne2006measurement, flora2004empirical, rhebro12}: number of indicators per factor (3 or 6), standardized factor loading magnitude (.4, .6, .8), number of response categories (3, 4, 5), response distribution (symmetric, skewed, or middling). In the skewed conditions, the response probability of the highest option was .04 (and in conditions with > 2 response options, the response probability of the second highest option was set to .06). In the middling conditions, the response probability of the lowest and highest response options were .05 (where this condition was not included for 2 response options). For conditions with sparse response distributions, we manipulated the proportion of indicators per latent factor affected by that sparse pattern (.33, .66, 1). For proportions less than 1, the remaining items had a symmetric response distribution.

In addition to these population model conditions, we also compared the two starting value options offered by {\em lavaan}: simple and default. With simple starting values, all parameter values are set to zero, except the factor loadings, which are set to 0.7, and (residual) variances, which are set to one. The default starting values are more involved. First, the factor loadings are estimated per factor using a two stage least squares estimator. Second, the residual variances of observed variables are set to half the observed variance, and all other (residual) variances are set to 0.05. Third, thresholds are set to the standard normal distribution variates that match the (cumulative) response probabilities. The remaining parameters (regression coefficients, covariances) are set to zero. 

We used {\em lavaan} \cite{ros12} to simulate 500 datasets for each fully crossed condition. Next, we used {\em lavaan} to fit the ordinal CFA model to each dataset, using each of the three identification constraints. These estimations used the default {\em lavaan} three-stage DWLS algorithm with ``theta'' parameterization. The sum constraints involved in our integer coding are handled in {\em lavaan} by projecting the full parameter vector to a reduced vector with nonredundant entries, then estimating this reduced parameter vector. See \citeA{ros15} for further detail.

The simulation outcomes of interest were convergence rate, admissible results rates (e.g., non-negative variance estimates and positive definite covariance matrices), and $\chi^2$ model fit estimates. We evaluated the impact of the conditions with a fixed-effects ANOVA, focusing on the partial Eta-squared ($\eta^{2}_{p}$) estimates, which were computed using {\em effectsize} \cite{benshachar}.

\subsection{Results}
We did not find much evidence that the integer constraints had estimation differences as compared to alternative identification constraint methods. Minor differences in convergence rates existed, but these were balanced out by differences in admissible result rates, resulting in almost identical converged and admissible (i.e., valid) result rates. Results of an ANOVA with converged and admissible result rates as the outcome variable indicated that the identification constraint had a negligible effect ($\eta^{2}_{p} = 0.001$). Similarly, starting values also minimally affected converged and admissible result rates ($\eta^{2}_{p} = 0$). Other simulation factors had a larger impact, ranging from $\eta^{2}_{p} = 0.088$ for response distribution to $\eta^{2}_{p} = 0.238$ for factor loading magnitude. Given the minimal impact of starting values, we will focus on the results when using simple starting values. Results for default starting values are presented in Appendix B. 

\paragraph{Convergence by Condition.}
To provide further insight into these findings, we depict a subset of conditions in Figure~\ref{fig:simres}. Within this figure, the y-axis shows the proportion of replications that converged and were admissible. Different factor loading magnitudes are shown on the x-axis, panel rows represent the number of indicators per factor, and panel columns represent the number of response categories. Within each plot, the three identification constraints are defined by different shapes and colors, and different response distributions are separated by line type. For the skewed and middling response distributions, we included results in which all indicators follow this pattern. We focus on these conditions because we found that results increasingly resembled the symmetric response distribution as the proportion of indicators with the skewed or middling response distributions decreased. Thus, the results in Figure~\ref{fig:simres} represent the most challenging conditions.

\begin{figure}
  \caption{Proportion of converged and admissible replications across simulation conditions when all indicators have a balanced, skewed, or middling response distribution.}
  \label{fig:simres}
\begin{knitrout}\footnotesize
\definecolor{shadecolor}{rgb}{0.969, 0.969, 0.969}\color{fgcolor}

{\centering \includegraphics[width=5in,height=5in]{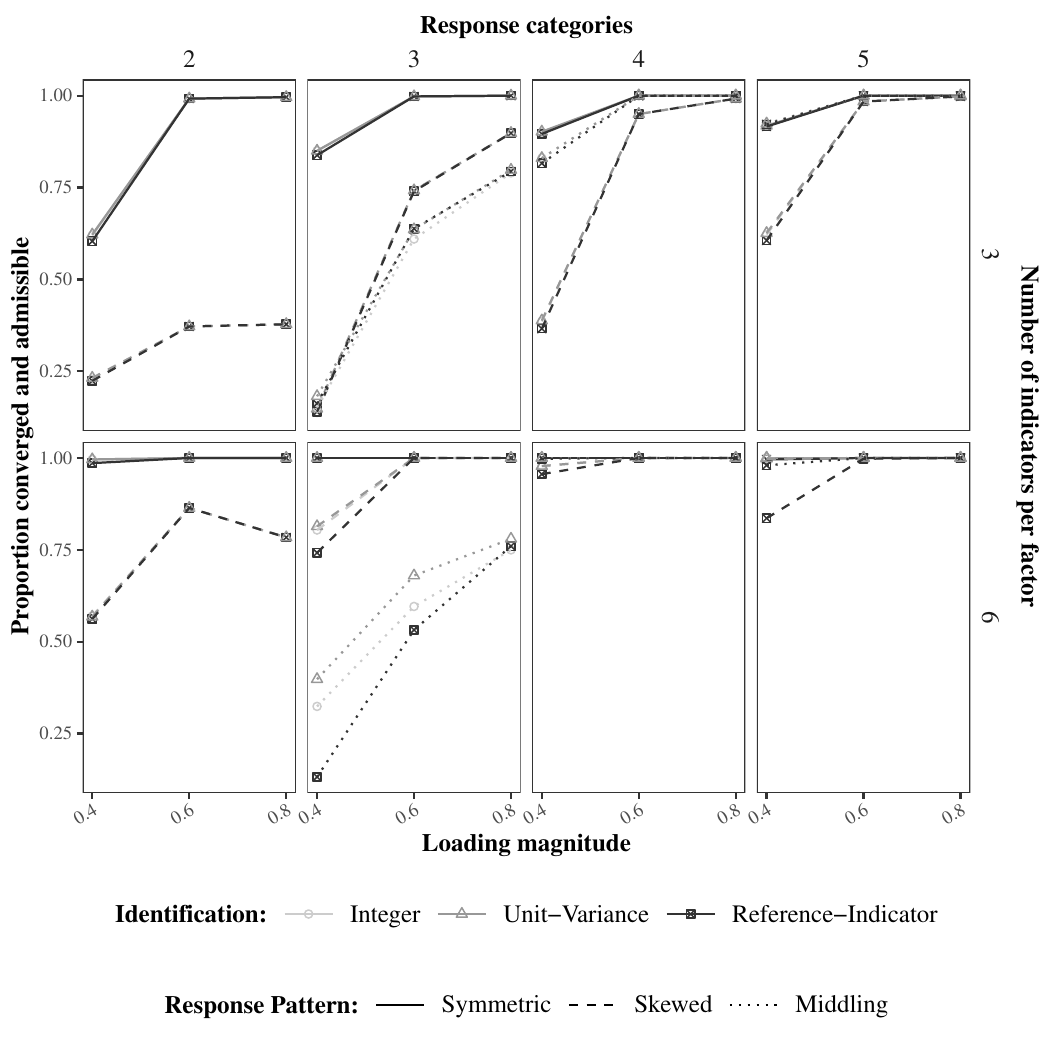} 

}

\end{knitrout}
\end{figure}

Figure~\ref{fig:simres} demonstrates that converged and admissible result rates are higher for models with more indicators, better measurement quality (i.e., higher factor loadings), items with more response categories, and symmetric response distributions. However, within a specific combination of these factors, the three identification constraint methods performed similarly (i.e., lines of matching type have near perfect overlap).

There were two exceptions to the finding that identification constraints performed similarly. These exceptions are both shown in row 2, column 1 of Figure~\ref{fig:simres}. First, for models with six indicators per factor, 0.4 factor loading magnitude, and items with three response categories which followed a skewed response distribution (dashed lines), the reference-indicator constraints resulted in lower converged and admissible result rates ($0.74$) compared to the other two identification constraint methods ($0.81$). Second, for the middling response distribution (dotted lines), model estimation was often problematic. For example, when the loading magnitude was 0.4, the reference-indicator constraints had a ``converged and admissible rate'' of $0.13$, with the proposed integer constraints having a rate of $0.32$ and the unit-variance constraints having a rate of $0.4$. These rates increase and become more similar as the loading magnitude increases. The conditions appear especially difficult because there are two thresholds per item, but nearly all the responses are in the middle category. This leads to considerable uncertainty in the thresholds, which is magnified by small loadings.

\paragraph{Estimation Quality.}
For those replications where all identification constraint methods converged and were admissible, we examined quality of estimation using the $\chi^2$ statistics of model fit (rounded to three decimal points). Similar to the convergence results from the previous paragraph, we observed differences in $\chi^2$ statistics under middling response distributions and three response categories (see Table~\ref{tab:chisq}). Differences decreased as the factor loading magnitude increased. A closer inspection of the differences in $\chi^2$-values across all conditions shows that the integer identification constraints most often resulted in a different $\chi^2$-value ($56$\%), followed by unit-variance ($20$\%), reference-marker ($19$\%), and replications where all three identification constraint methods produced different $\chi^2$-values ($6$\%). When response distributions were symmetric or skewed, $\chi^2$-values were identical for the vast majority of replications (see Appendix B).

To better understand the differences in $\chi^2$-values that occurred with the middling response distribution, we focused on the most problematic conditions with six indicators per factor that had three response options. Table~\ref{tab:chisq2} shows that, when differences across identification methods arose, the reference-marker identification method was somewhat more likely to result in the best fit (i.e.,lowest $\chi^2$-value), in some cases together with a second identification method. This pattern was more apparent when the middling response distribution was applied to all indicators and the loading magnitude was lowest. Full results for all middling response distribution conditions are included in Appendix B.

\begin{table}[ht]
\centering
\begingroup\footnotesize
\begin{tabular}{llcccccc}
  \hline
  Loading Mag. & Prop. Sparse &  \multicolumn{6}{c}{Response Options}\\ & & \multicolumn{3}{c}{3 Indicators} & \multicolumn{3}{c}{6 Indicators}\\ & & 3 & 4 & 5 & 3 & 4 & 5\\ \hline
0.4 & 0.33 & 0.99 & 1.00 & 1.00 & 0.87 & 0.99 & 0.99 \\ 
   & 0.67 & 0.94 & 1.00 & 1.00 & 0.65 & 0.98 & 0.99 \\ 
   & 1.00 & 0.93 & 1.00 & 1.00 & 0.51 & 0.99 & 0.99 \\ 
  0.6 & 0.33 & 1.00 & 1.00 & 1.00 & 0.92 & 1.00 & 1.00 \\ 
   & 0.67 & 0.98 & 1.00 & 1.00 & 0.75 & 0.99 & 1.00 \\ 
   & 1.00 & 0.93 & 1.00 & 1.00 & 0.56 & 1.00 & 1.00 \\ 
  0.8 & 0.33 & 1.00 & 1.00 & 1.00 & 0.99 & 1.00 & 1.00 \\ 
   & 0.67 & 1.00 & 1.00 & 1.00 & 0.97 & 1.00 & 1.00 \\ 
   & 1.00 & 1.00 & 1.00 & 1.00 & 0.95 & 1.00 & 1.00 \\ 
   \hline
\end{tabular}
\endgroup
\caption{Proportion replications with middling response pattern resulting in identical fit across identification constraint methods.} 
\label{tab:chisq}
\end{table}

\begin{table}[ht]
\centering
\begingroup\footnotesize
\begin{tabular}{lccccccccc}
  \hline
  Best Fit & \multicolumn{9}{c}{Proportion Sparse Indicators}\\ & \multicolumn{3}{c}{Loadings: 0.4} & \multicolumn{3}{c}{Loadings: 0.6} & \multicolumn{3}{c}{Loadings: 0.8}\\ & 0.33 & 0.67 & 1.00 & 0.33 & 0.67 & 1.00 & 0.33 & 0.67 & 1.00\\ \hline
All & 0.87 & 0.65 & 0.51 & 0.92 & 0.75 & 0.56 & 0.99 & 0.97 & 0.95 \\ 
  Reference-Indicator & 0.01 & 0.04 & 0.18 & 0.00 & 0.02 & 0.06 & 0.00 & 0.00 & 0.01 \\ 
  Unit-Variance & 0.03 & 0.06 & 0.02 & 0.01 & 0.05 & 0.04 & 0.00 & 0.01 & 0.01 \\ 
  Integer & 0.03 & 0.07 & 0.04 & 0.00 & 0.04 & 0.07 & 0.01 & 0.00 & 0.00 \\ 
  RI \& UV & 0.01 & 0.05 & 0.11 & 0.02 & 0.05 & 0.09 & 0.00 & 0.02 & 0.01 \\ 
  RI \& I & 0.01 & 0.05 & 0.07 & 0.02 & 0.04 & 0.08 & 0.00 & 0.00 & 0.02 \\ 
  UV \& I & 0.03 & 0.09 & 0.07 & 0.02 & 0.03 & 0.10 & 0.00 & 0.00 & 0.01 \\ 
   \hline
\end{tabular}
\endgroup
\caption{Proportion replications with middling response pattern, six indicators, and three response categories resulting in best fit across identification constraint methods.} 
\label{tab:chisq2}
\end{table}

\paragraph{Summary.}
The simulation study showed that the proposed integer identification constraints do not meaningfully affect estimation admissibility, convergence, or quality. When differences do emerge, the proposed integer identification constraints are more similar to the unit-variance identification constraint method, and both perform better than the reference-indicator identification method. Problems can arise when there are few ordinal categories, and the bulk of responses are in a single middle category. In this case, estimation is more difficult regardless of identification constraint, and integer coding does not necessarily perform best. But integer coding also does not consistently perform worse than other sets of constraints in those situations.

\begin{table}[ht]
\centering
\begingroup\footnotesize
\begin{tabular}{rccccccc}
  \hline
 & comfort & environment & work & future & technology & industry & benefit \\ 
  \hline
1 &   5 &  29 &  33 &  14 &  18 &  10 &  21 \\ 
  2 &  32 &  90 &  98 &  72 &  91 &  47 & 100 \\ 
  3 & 266 & 145 & 206 & 210 & 157 & 173 & 193 \\ 
  4 &  89 & 128 &  55 &  96 & 126 & 162 &  78 \\ 
   \hline
\end{tabular}
\endgroup
\caption{Item response frequencies of the attitudes toward science dataset.} 
\label{tab:datfreq}
\end{table}

\section{Example 1: Comparison to Traditional Estimates}
To build intuition for how the constraints work in practice, we use real data to compare a model with traditional identification constraints to a model with our proposed integer constraints. We use a 7-item survey of attitudes toward science and technology \cite{reif15}, where each item has the ordered categories of ``strongly disagree,'' ``disagree,'' ``agree,'' ``strongly agree.'' The dataset includes responses from 392 individuals, with no missing values. It is available via the {\em ltm} R package \cite{ltm}, with item response frequencies being shown in Table~\ref{tab:datfreq}.

\subsection{Method}
We used {\em lavaan} \cite{ros12} to fit a 1-factor, ordinal CFA model to the 7 items via the default DWLS algorithm (obtained via the argument ordered = TRUE). We first fit the model using the traditional constraints from Equation~\eqref{eq:tradid} (i.e., using the ``theta'' parameterization), and we then fit the model using the alternative constraints of:
\begin{align*}
  \displaystyle \sum_{j=1}^7 \nu_j &= 0 \\  
  \frac{1}{7} \displaystyle \sum_{j=1}^7 \lambda_j &= 1 \\
  \tau_{j1} = 1.5\ &\text{for }j=1, \ldots, 7 \\
  \tau_{j3} = 3.5\ &\text{for }j=1, \ldots, 7.
\end{align*}
After model estimation, we obtained MAP estimates of the latent variable for each respondent.

\begin{table}[ht]
\centering
\begingroup\footnotesize
\begin{tabular}{lccccccc}
  \hline
 & comfort & environment & work & future & technology & industry & benefit \\ 
  \hline
Trad est & 0.60 & 0.48 & 0.33 & 0.54 & 0.50 & 0.68 & 0.46 \\ 
  \hspace{.1in} (SE) & (0.09) & (0.07) & (0.07) & (0.07) & (0.07) & (0.09) & (0.07) \\ 
  Int est & 0.89 & 1.17 & 0.66 & 0.98 & 1.07 & 1.34 & 0.88 \\ 
  \hspace{.1in} (SE)  & (0.10) & (0.14) & (0.12) & (0.11) & (0.13) & (0.15) & (0.12) \\ 
   \hline
\end{tabular}
\endgroup
\caption{Comparison of loading estimates and SEs under traditional constraints and under integer constraints.} 
\label{tab:loadcomp}
\end{table}

\begin{table}[ht]
\centering
\begingroup\scriptsize
\begin{tabular}{lccccccc}
  \hline
 & comfort & environment & work & future & technology & industry & benefit \\ 
  \hline
Trad est (SE) & $ -2.61 $  (0.21) & $ -1.60 $  (0.11) & $ -1.45 $  (0.10) & $ -2.05 $  (0.14) & $ -1.88 $  (0.12) & $ -2.36 $  (0.17) & $ -1.78 $  (0.12) \\ 
   & $ -1.54 $  (0.11) & $ -0.57 $  (0.07) & $ -0.45 $  (0.07) & $ -0.88 $  (0.08) & $ -0.66 $  (0.08) & $ -1.28 $  (0.10) & $ -0.55 $  (0.07) \\ 
    & $ 0.87 $  (0.09) & $ 0.50 $  (0.07) & $ 1.14 $  (0.08) & $ 0.79 $  (0.08) & $ 0.52 $  (0.07) & $ 0.27 $  (0.08) & $ 0.93 $  (0.08) \\ 
     &  &  &  &  &  &  &  \\ 
  Int est (SE) & $ 1.50 $  (--) & $ 1.50 $  (--) & $ 1.50 $  (--) & $ 1.50 $  (--) & $ 1.50 $  (--) & $ 1.50 $  (--) & $ 1.50 $  (--) \\ 
       & $ 2.12 $  (0.08) & $ 2.48 $  (0.06) & $ 2.27 $  (0.05) & $ 2.32 $  (0.06) & $ 2.52 $  (0.06) & $ 2.32 $  (0.08) & $ 2.41 $  (0.06) \\ 
        & $ 3.50 $  (--) & $ 3.50 $  (--) & $ 3.50 $  (--) & $ 3.50 $  (--) & $ 3.50 $  (--) & $ 3.50 $  (--) & $ 3.50 $  (--) \\ 
   \hline
\end{tabular}
\endgroup
\caption{Comparison of threshold estimates and SEs under traditional constraints and under integer constraints.} 
\label{tab:threshcomp}
\end{table}

\begin{figure}
  \caption{Average of observed variables versus MAP latent variable predictions for the attitudes toward science dataset.}
  \label{fig:avglv}
\begin{knitrout}\footnotesize
\definecolor{shadecolor}{rgb}{0.969, 0.969, 0.969}\color{fgcolor}

{\centering \includegraphics[width=4in,height=4in]{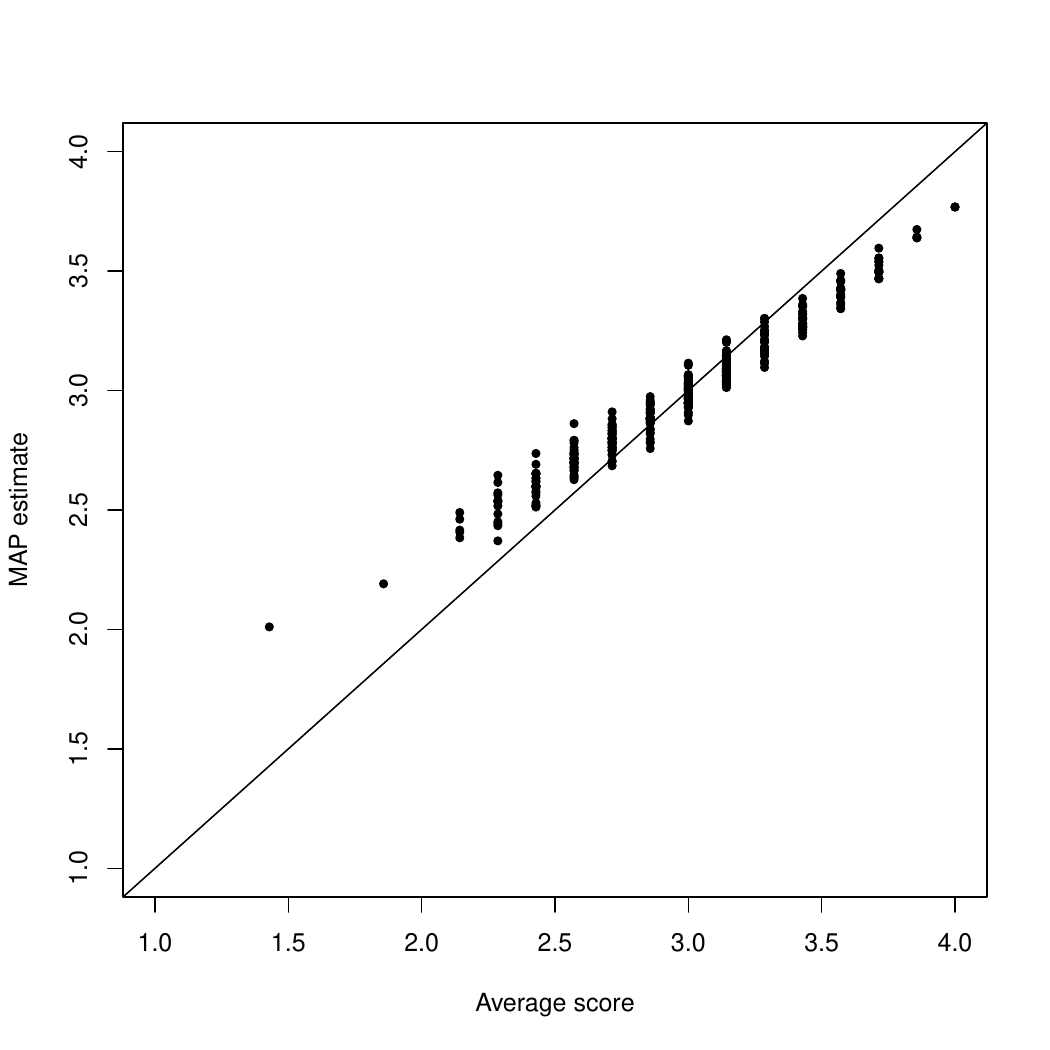} 

}

\end{knitrout}
\end{figure}

\subsection{Results}
As expected, the discrepancy function and $\chi^2$ statistic were identical for the estimated model with traditional identification constraints as compared to the estimated model with the alternative identification constraints. The models do not fit well by any of the traditional fit metrics (e.g., $\chi^2_{14} = 322, p < .01$; RMSEA = 0.24), and poor model fit as well as model misspecifications can lead to questionable parameter interpretations. But because fit is held constant across identification constraints, we proceed with comparing parameter estimates across the two sets of identification constraints.

We begin by comparing estimates of parameters that are shared across the two models. Table~\ref{tab:loadcomp} compares estimated loadings and standard errors under the traditional and alternative constraints, while Table~\ref{tab:threshcomp} does the same for thresholds. Examining Table~\ref{tab:loadcomp}, we see that the loadings and standard errors are larger under the alternative constraints because they are constrained to average 1. The alternative constraints provide a basis for interpreting loadings: values above 1 are larger than average, and values below 1 are smaller than average. The ``work'' item stands out as having the smallest loading under both sets of constraints.

Examining Table~\ref{tab:threshcomp}, many thresholds have no standard errors under the alternative constraints because they are fixed. The free thresholds have standard errors from .05 to .08, which are similar to the standard errors under traditional constraints. Additionally, the threshold estimates under the alternative constraints are intuitive because they can be compared to the 1.5--2.5--3.5 values that would help us to treat the observed variables as continuous. We see that the ``environment'' and ``technology'' items most closely correspond to this pattern, while the middle thresholds for ``comfort'' and ``work'' are noticeably smaller than 2.5. These thresholds interact with the estimated latent mean and variance, which we can freely estimate under the alternative constraints. The estimates are 3 and 0.15, respectively, suggesting that participants generally have high values of the latent variable (attitude toward science). Said differently, the midpoint of a 1--5 scale is 2.5, and the estimated mean of the latent variable is a half-point larger than this midpoint. This result corresponds to the observed response frequencies from Table~\ref{tab:datfreq}.

Finally, Figure~\ref{fig:avglv} compares the average of each participant's ordinal variables to the MAP prediction of the latent variable under the alternative constraints. We see that the MAP predictions are similar to the averages, with some shrinkage whereby the extreme averages have less-extreme latent variable predictions. We also see that the averages and latent variable predictions differ the most for participants with low averages (below 2), reflecting the result that participants generally tended to respond with ``agree'' or ``strongly agree'' on the ordinal scale.

\section{Example 2: Item Response Application}
To further illustrate how the integer constraints work in practice, we now consider a model estimated in an item response framework. We fit our model via marginal maximum likelihood, capitalizing on the flexibility of the {\em mirt} package \cite{cha12} to implement our constraints and to fit the model. In the language of IRT, we can say we are estimating a graded response model with a probit link function. 

\subsection{Method}
We use data from a study of social media privacy \cite{diemet16}, where respondents completed scales related to their use of Facebook and their privacy concerns. We focus on a 5-item subscale of respondents' perceived Facebook benefits that includes items such as ``Facebook allows me to express my personality and feelings.'' Each item contained 5 response categories from ``strongly disagree'' to ``strongly agree.'' The data are available at \url{https://osf.io/e3j98/} and contain responses from 1,156 online participants, where the sampling scheme was designed to be representative of American adults \cite<see>{diemet16}. We model 1,057 participants who supplied complete data on the Facebook benefits scale, which allows for simpler model computations and summaries.

We fit the graded response model with integer constraints in {\em mirt}, making use of package functionality to define new item types and to implement parameter constraints. 
The {\em mirt} marginal maximum likelihood estimation algorithm involves rectangular quadrature with 61 nodes.
The specific integer constraints for this example are:
\begin{align*}
  \displaystyle \sum_{j=1}^5 \nu_j &= 0 \\  
  \frac{1}{5} \displaystyle \sum_{j=1}^5 \lambda_j &= 1 \\
  \tau_{j1} = 1.5\ &\text{for }j=1,\ldots,5 \\
  \tau_{j4} = 4.5\ &\text{for }j=1,\ldots,5.
\end{align*}
To estimate the model with sum constraints on the intercepts and loadings, {\em mirt} makes use of the optimizer from the package {\em Rsolnp} \cite{rsolnp}. This includes a Lagrange multiplier method that can handle both linear and nonlinear parameter constraints.

\subsection{Results}
We first examine model fit, using {\em mirt} to obtain the C2 statistic of \citeA{caimon14}. This statistic rejects the hypothesis of exact fit (C2(df = 3) = 10.03, p = 0.02), which commonly happens in practice. The 90\% confidence interval for RMSEA is (0.017, 0.081), providing some evidence that the model fit is adequate \cite<e.g.,>{may13,mayjoe14}.

\begin{table}[ht]
\centering
\begin{tabular}{lccccc}
  \hline
 & Tau2 & Tau3 & Lambda & Nu & Theta \\ 
  \hline
Item 1 & 2.07 & 3.05 & 0.99 & 0.20 & 0.34 \\ 
  Item 2 & 2.23 & 3.27 & 0.94 & -0.00 & 0.31 \\ 
  Item 3 & 2.16 & 3.11 & 1.04 & -0.05 & 0.27 \\ 
  Item 4 & 2.26 & 3.24 & 0.99 & 0.09 & 0.29 \\ 
  Item 5 & 2.29 & 3.31 & 1.04 & -0.24 & 0.24 \\ 
   \hline
\end{tabular}
\caption{Item parameter estimates for Example 2.} 
\label{tab:ex2est}
\end{table}

Item parameter estimates are shown in Table~\ref{tab:ex2est}. The first two columns are the two free thresholds, followed by the loadings, intercepts, and residual variances (the Tau1 and Tau4 parameters are fixed to 1.5 and 4.5, respectively, for all items). In addition to these parameters, the latent variable mean and variance are estimated to be 2.9
and 0.53, respectively.

From the table, we see that the estimated thresholds for each item are lower than the benchmark values of 2.5 and 3.5. Combined with the fact that the latent variable mean is near the midpoint of 3, this suggests that participants avoided the ``strongly disagree'' option of the scale. The estimated loadings are all near the benchmark value of 1, and no items stand out as being exceptionally better or worse than the others.

\begin{figure}
  \caption{Average of observed variables versus MAP latent variable predictions for the social media dataset.}
  \label{fig:avglv2}
\begin{knitrout}\footnotesize
\definecolor{shadecolor}{rgb}{0.969, 0.969, 0.969}\color{fgcolor}

{\centering \includegraphics[width=4in,height=4in]{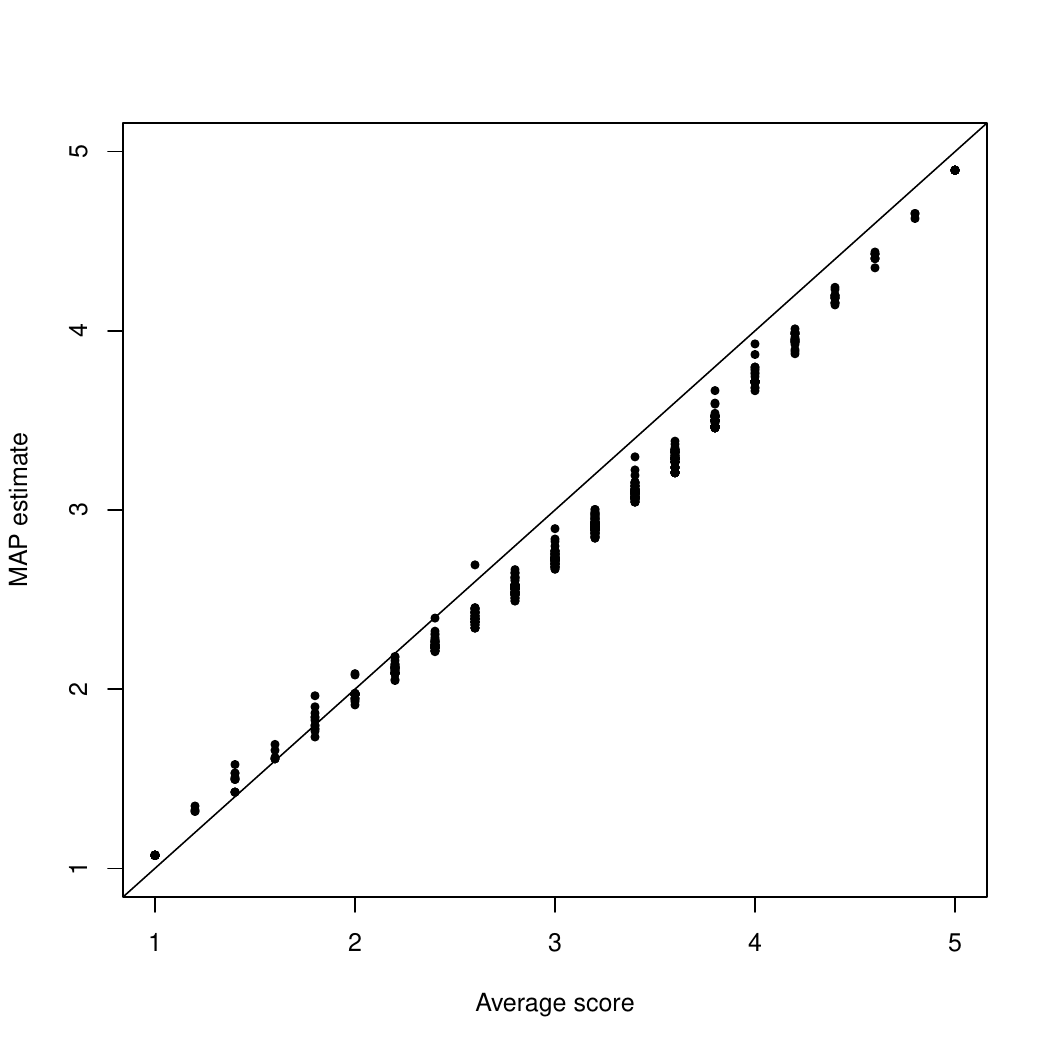} 

}

\end{knitrout}
\end{figure}

Figure~\ref{fig:avglv2} is similar to Figure~\ref{fig:avglv} from our previous example, showing the average score for each individual versus the MAP predictions from the integer-constrained model. We see close agreement here, with points falling slightly below the diagonal for average scores near 3 and larger. This is related to our observation that the thresholds for all items are below the benchmarks of 2.5 and 3.5: the model estimates that people tend to avoid the ``strongly disagree'' option, so lower values of the latent variable can still lead participants to select higher response options.

\begin{figure}
  \caption{Latent variable values versus expected average score (left panel), with overlaid points of MAP latent variable estimates versus observed average scores (right panel).}
  \label{fig:expplot}
\begin{knitrout}\footnotesize
\definecolor{shadecolor}{rgb}{0.969, 0.969, 0.969}\color{fgcolor}

{\centering \includegraphics[width=6.5in]{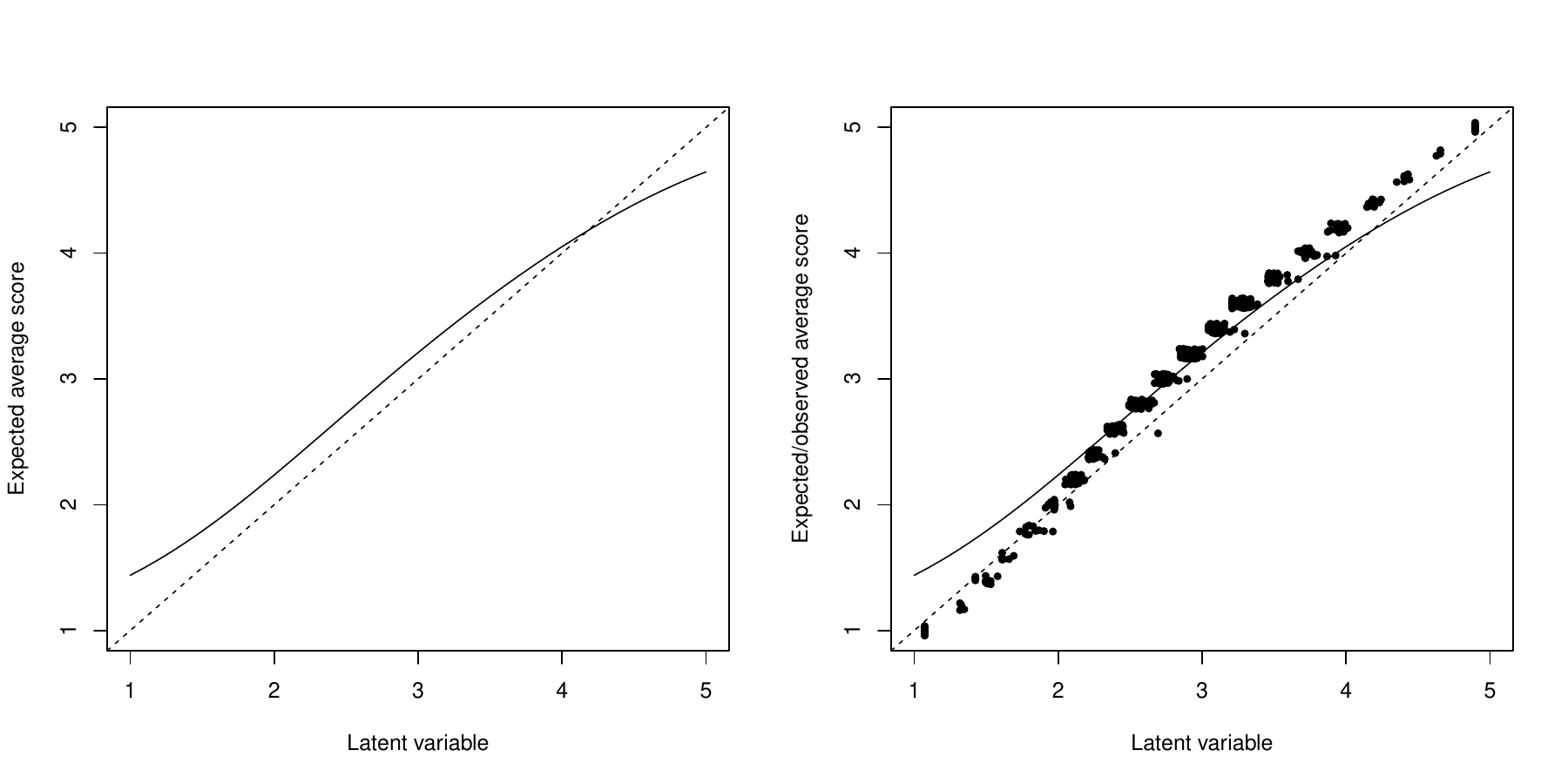} 

}

\end{knitrout}
\end{figure}

A common IRT model summary involves visualization of how the expected test score changes with the latent variable. For ordinal variables, the expected score is typically the sum of integer-coded responses. We consider a similar summary here, showing how the expected average score (the expected test score divided by number of items) varies as a function of the latent variable. In the left panel of Figure~\ref{fig:expplot}, the solid line shows the expected average score (y-axis) for varying values of the latent variable (x-axis) under integer constraints. We see that the line is above the dashed diagonal on the left side of the panel, which is related to the idea that people avoided the ``strongly disagree'' option. That is, participants with low values of the latent variable are expected to have averages greater than 1. The solid line closely follows the diagonal for the rest of the figure, with a crossing near the maximal expected average of 5.

The right panel of Figure~\ref{fig:expplot} contains the same expected average line, with points showing each person's estimated latent variable (x-axis) versus their observed average (y-axis). We see that the points are similar to the expected average near the middle (latent variables of 2.5 to 3.5), and they stray from the expected average near the extremes. This difference is likely due to the boundaries of the expected average score. That is, the average score has a hard lower bound of 1 and a hard upper bound of 5, so the expected value will be pulled towards the center of the scale. The latent variables are unbounded, allowing us to observe predictions near the extremes of 1 and 5.

The {\em mirt} model estimation that we implemented here is not comprehensive. Most notably, we did not handle missing values, and we did not obtain standard errors of parameter estimates under integer constraints. The latter requires further analytical work on the Hessian under integer constraints, or a Jacobian so that we can apply the delta method to standard errors under traditional constraints. But we have illustrated that the constraints can be applied in traditional IRT settings, where latent variables are commonly used for scoring purposes.

\section{General Discussion}
In this paper, we first considered how constraints on an item factor analysis model can lead to latent variable predictions mimicking the average of observed ordinal variables, where the variables are coded as 1, 2, \ldots, $K$. Based on these constraints, we then defined a set of minimal identification constraints (``integer constraints'') that puts the latent variable on the scale of the integer-coded ordinal variable. This is potentially worthwhile because applied researchers are accustomed to thinking on the scale of the ordinal variable and to treating ordinal variables as though they are continuous. Our simulation showed that the integer constraints did not meaningfully influence rates of model convergence or admissibility, at least for the conditions examined. Our examples showed specific uses of the constraints, including enhanced interpretation of parameter estimates, intuitive latent variable predictions, and application in traditional IRT settings. In the sections below, we consider additional uses of integer constraints and potential extensions of our results.

\subsection{Additional Applications}
The integer constraints suggest a likelihood ratio test of whether or not the observed ordinal variables can be treated as continuous. That is, we can fit an ordinal CFA with our proposed minimal identification constraints, then conduct a likelihood ratio test comparing this model to the highly-constrained model whose latent variable predictions are the observed averages. If the likelihood ratio test suggests that the fit of the two models is equal, then researchers could feel more confident about treating their ordinal variables as continuous. We are doubtful that this likelihood ratio test will often indicate that the fit of the two models is equal.

Related to the likelihood ratio test, integer constraints could be further considered in the context of measurement invariance studies with multiple groups. \citeA{wuest16} provide a comprehensive treatment of measurement invariance under traditional identification constraints, and we used some of their results in this paper. Because we have defined transformation matrices to convert traditional constraints to integer constraints, much of the Wu and Estabrook results could be translated to testing measurement invariance under integer constraints. The integer constraints may help to make measurement invariance testing more interpretable and intuitive.

Finally, the integer constraints have potential uses in Bayesian modeling because they potentially make specification of prior distributions more intuitive. For example, because loadings are constrained to average 1 under integer constraints, the priors for factor loadings would often have a mean of 1. And because the factor mean is related to the average of ordinal variables, researchers may more easily convert their prior expectations to prior distributions. On the other hand, the sum constraints on loadings can complicate the prior distributions of those parameters \cite<e.g.,>{merari23}, so that Bayesian SEM software may not automatically handle the constraints. One possible solution involves discarding the sum constraints on loadings, replacing them with constraints on a single loading per factor.

If we are to maintain the sum constraints in a Bayesian context, another possible solution involves estimating the model using the traditional identification constraints (which are available in most software) while specifying priors for integer-constrained parameters. Then we would additionally need the Jacobian for transforming the parameters under traditional constraints to the parameters under integer constraints, which involves the results from Remark 2. For many models, this Jacobian will involve the determinant of a large matrix, though the structure of the underlying matrix may allow for fast determinant computations. Further work is needed.

\subsection{Differing Numbers of Categories per Variable}
The developments in this paper relied on the assumption that the observed variables are all ordinal with $K$ categories. In practice, it is common to have ordinal variables with differing numbers of categories, for example two ordinal variables with three possible categories and three ordinal variables with five possible categories. In this case, we write that each ordinal variable $j$ has $K_j$ response categories. It is more cumbersome to specify identification constraints here, because the differing number of thresholds per variable complicates matrix manipulations such as Equation~\eqref{eq:tran1}.

Because integer constraints were designed to be close to the average of integer-coded variables, we should also consider whether it makes sense to take an average when variables have different numbers of categories. As an extreme example, consider a situation where two variables have three categories and a third variable has 50 categories. If we code each variable using integers starting at 1 and then average them, it is clear that the third variable will usually dominate the average. This suggests that we should view the 3-category variables as coarsened versions of the 50-category scale. That is, if we were to code the 3-category variables so that they assumed values on a scale from 1 to 50, then it would make more sense to average across the variables. In the context of our integer constraints, this amounts to fixing the thresholds of the 3-category variables to values other than 1.5 and 2.5. The thresholds should instead divide the 50-point scale into three equal segments, which here corresponds to a lower threshold of 17.17 and an upper threshold of 33.83.

In general terms, let $K_j$ be the number of categories for ordinal variable $j$. For a specific latent variable $k$, let $K_{\text{max}} = \text{max}_{j \in \mathcal{S}_k} K_j$. Then for all observed variables that load on latent variable $k$ (i.e., for all $j \in \mathcal{S}_k$), we should fix
\begin{align*}
  \tau_{j1} &= \frac{1}{2} + \frac{K_{\text{max}}}{K_j} \\
  \tau_{j(K_j-1)} &= \frac{1}{2} + \frac{K_{\text{max}}(K_j - 1)}{K_j}.
\end{align*}
The other constraints on intercepts and loadings remain the same as before. A modification is additionally required for a binary variable $j$, which only has a single threshold parameter. For that case, we fix $\tau_{j1}$ in the above manner while also fixing the intercept $\nu_j$ to 0.

\subsection{Summary}
The sum score is a major consideration in the historical development of psychometrics as well as in current developments \cite<e.g.,>{mcn24,mis24,sijell24}. In this paper, we studied an integer identification constraint for ordinal CFA that has a direct relationship to the sum score, where we average the ordinal items as if they are continuous. These constraints might balance the concerns of those who view the sum score as unsophisticated with those who view the sum score as a benchmark. In drawing on the intuition of the sum score, we hope that the integer constraints will enable more researchers to meaningfully employ common psychometric models of ordinal data. 

\section*{Computational Details}

All results were obtained using the \proglang{R}~system for statistical computing \cite{r23},
version~4.5.2, making use of the {\em lavaan} \cite{ros12}, {\em ggplot2} \cite{ggplot2}, and {\em xtable} \cite{xtable} packages. Code for reproducing the results in this paper and for applying integer constraints to other {\em lavaan} models is available at \url{https://semtools.r-forge.r-project.org/}.

\bibliography{refs}

\appendix

\section{Alternative Integer Constraints}
We considered many variations of the integer constraints presented in this paper. Our proposed constraints appear to get us close to the integer-coded average across a variety of scenarios. Below, we describe some variations that we considered, along with problems that we encountered.

\paragraph{Constrain one threshold per observed variable instead of two.} We initially constrained each observed variable's middle threshold to the middle of the integer scale. For example, for an observed variable $j$ with $K_j = 4$, we fixed $\tau_{j2} = 2.5$. For an observed variable $j$ with $K_j = 5$, we constrained the second and third thresholds to be symmetric around 3, i.e., $3 - \tau_{j2} = \tau_{j3} - 3$. Then, to make up for the lack of constraints on the second threshold, we either constrained $\theta_{jj} = 1$ or $\nu_j = 0$. A problem with these constraints is that the outer thresholds $\tau_{j1}$ and $\tau_{j(K_j-1)}$ were often estimated to be outside of $(1,K_j)$, so that there was a larger mismatch between observed averages and latent variable predictions.

\paragraph{Constrain two middle thresholds instead of two outer thresholds.} We again observed that the outer thresholds would be estimated outside of $(1,K_j)$, leading again to a larger mismatch between observed averages and latent variable predictions.

\paragraph{Constrain the loadings' geometric mean to equal 1, instead of the arithmetic mean.} Under this constraint, we may experience inconsistent signs of individual loadings. For example, when we have an even number of loadings, the loadings can all be negative yet have a geometric mean of 1. This will not necessarily be problematic in situations where all loadings are expected to have the same sign, and where we use frequentist estimation. It may be more problematic for Bayesian models where some loadings' posterior distributions overlap with 0. In that case, MCMC methods may experience problems because they will sometimes encounter loadings with differing signs.

\section{Supplemental Figures and Tables}

The figures and tables below show results from additional simulation conditions. Specifically, Figure~\ref{fig:simresdef}, Table~\ref{tab:chisqdef}, and Table~\ref{tab:chisq2def} mimic the simulation study results presented in the main text of this paper but focus on conditions when default starting values are used. In addition, Table~\ref{tab:chisq_supp} and Table~\ref{tab:chisq_supp_def} present supplementary results on the proportion of replications with identical model fit across identification constraints for conditions using simple and default starting values, respectively. Similarly, Table~\ref{tab:supp2} and Table~\ref{tab:supp2_def} present supplementary results for models with three indicators on the proportion of replications that resulted in best fit for each identification constraint for conditions using simple and default starting values, respectively. Finally, Table~\ref{tab:supp3} and Table~\ref{tab:supp3_def} present supplementary results for models with six indicators on the proportion of replications that resulted in best fit for each identification constraint for conditions using simple and default starting values, respectively.

\begin{figure}
  \caption{Proportion of converged and admissible replications across simulation conditions when all indicators have a balanced, skewed, or middling response distribution, using default starting values.}
  \label{fig:simresdef}
\begin{knitrout}\footnotesize
\definecolor{shadecolor}{rgb}{0.969, 0.969, 0.969}\color{fgcolor}

{\centering \includegraphics[width=6in,height=6in]{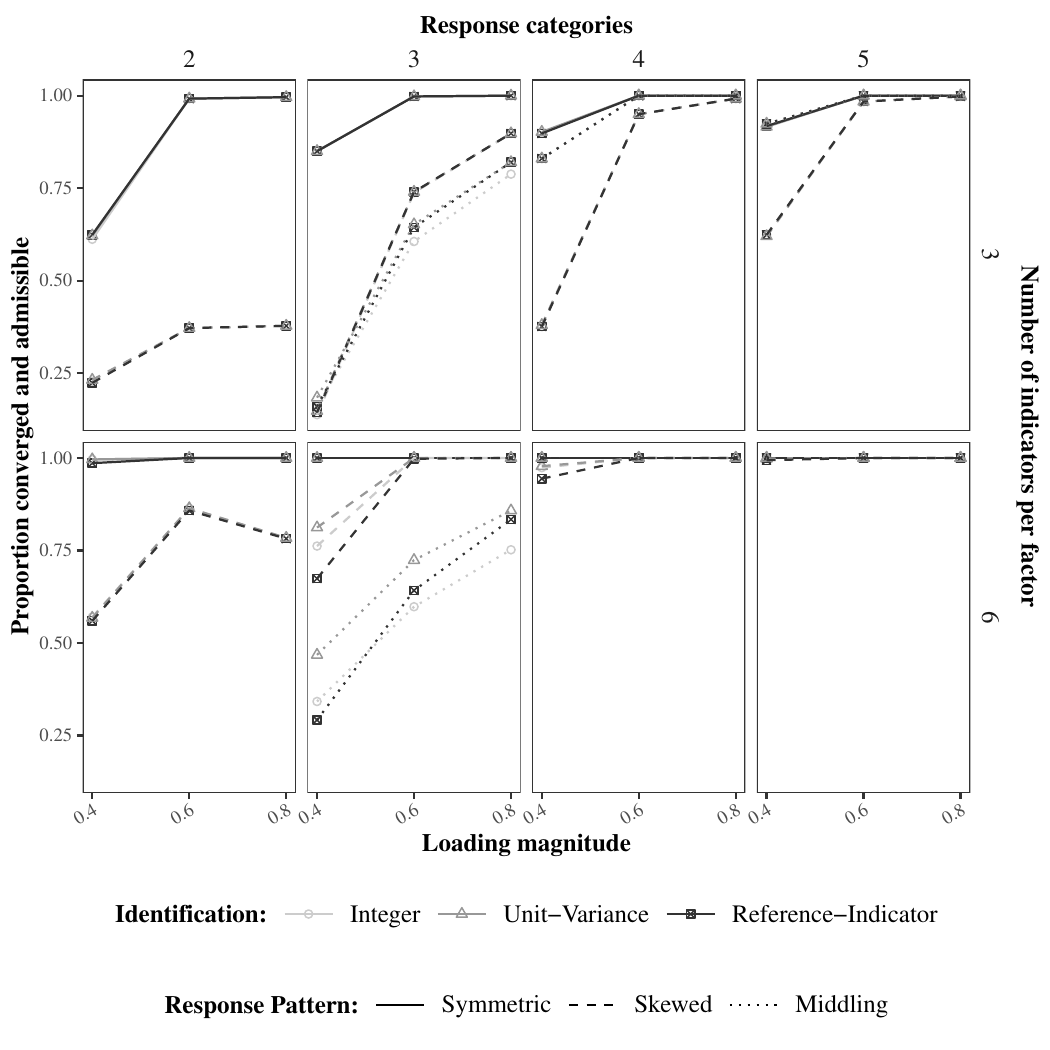} 

}

\end{knitrout}
\end{figure}

\begin{table}[ht]
\centering
\begingroup\footnotesize
\begin{tabular}{llcccccc}
  \hline
  Loading Mag. & Prop. Sparse &  \multicolumn{6}{c}{Response Options}\\ & & \multicolumn{3}{c}{3 Indicators} & \multicolumn{3}{c}{6 Indicators}\\ & & 3 & 4 & 5 & 3 & 4 & 5\\ \hline
0.4 & 0.33 & 0.99 & 1.00 & 1.00 & 0.92 & 0.99 & 0.98 \\ 
   & 0.67 & 0.98 & 1.00 & 1.00 & 0.75 & 0.98 & 0.99 \\ 
   & 1.00 & 0.92 & 1.00 & 1.00 & 0.56 & 0.99 & 0.99 \\ 
  0.6 & 0.33 & 0.99 & 1.00 & 1.00 & 0.96 & 1.00 & 1.00 \\ 
   & 0.67 & 0.98 & 1.00 & 1.00 & 0.86 & 1.00 & 1.00 \\ 
   & 1.00 & 0.96 & 1.00 & 1.00 & 0.73 & 1.00 & 1.00 \\ 
  0.8 & 0.33 & 1.00 & 1.00 & 1.00 & 0.99 & 1.00 & 1.00 \\ 
   & 0.67 & 1.00 & 1.00 & 1.00 & 0.97 & 1.00 & 1.00 \\ 
   & 1.00 & 0.99 & 1.00 & 1.00 & 0.96 & 1.00 & 1.00 \\ 
   \hline
\end{tabular}
\endgroup
\caption{Proportion replications with middling response pattern resulting in identical fit across identification constraint methods, using default starting values.} 
\label{tab:chisqdef}
\end{table}

\begin{table}[ht]
\centering
\begingroup\footnotesize
\begin{tabular}{lccccccccc}
  \hline
  Best Fit & \multicolumn{9}{c}{Proportion Sparse Indicators}\\ & \multicolumn{3}{c}{Loadings: 0.4} & \multicolumn{3}{c}{Loadings: 0.6} & \multicolumn{3}{c}{Loadings: 0.8}\\ & 0.33 & 0.67 & 1.00 & 0.33 & 0.67 & 1.00 & 0.33 & 0.67 & 1.00\\ \hline
All & 0.92 & 0.75 & 0.56 & 0.96 & 0.86 & 0.73 & 0.99 & 0.97 & 0.96 \\ 
  Reference-Indicator & 0.01 & 0.02 & 0.06 & 0.00 & 0.00 & 0.01 & 0.00 & 0.00 & 0.00 \\ 
  Unit-Variance & 0.00 & 0.03 & 0.05 & 0.00 & 0.01 & 0.00 & 0.00 & 0.00 & 0.00 \\ 
  Integer & 0.02 & 0.07 & 0.17 & 0.02 & 0.07 & 0.13 & 0.00 & 0.01 & 0.01 \\ 
  RI \& UV & 0.02 & 0.05 & 0.10 & 0.01 & 0.05 & 0.07 & 0.01 & 0.02 & 0.02 \\ 
  RI \& I & 0.00 & 0.04 & 0.04 & 0.00 & 0.01 & 0.03 & 0.00 & 0.00 & 0.00 \\ 
  UV \& I & 0.03 & 0.04 & 0.04 & 0.00 & 0.01 & 0.02 & 0.00 & 0.00 & 0.00 \\ 
   \hline
\end{tabular}
\endgroup
\caption{Proportion replications with middling response pattern, six indicators, and three response categories resulting in best fit across identification constraint methods, using default starting values.} 
\label{tab:chisq2def}
\end{table}

\begin{table}
\centering
\begingroup\footnotesize
\begin{tabular}{lllcccccc}
  \hline
  Pattern & Loading Mag. & Prop. Sparse & \multicolumn{6}{c}{Response Options}\\ & & & \multicolumn{3}{c}{3 Indicators} & \multicolumn{3}{c}{6 Indicators}\\ & & &  3 & 4 & 5 &  3 & 4 & 5\\ \hline
Symmetric & 0.4 & 0.00 & 1.00 & 1.00 & 1.00 & 0.99 & 1.00 & 0.98 \\ 
   & 0.6 & 0.00 & 1.00 & 1.00 & 1.00 & 1.00 & 1.00 & 1.00 \\ 
   & 0.8 & 0.00 & 1.00 & 1.00 & 1.00 & 1.00 & 1.00 & 1.00 \\ 
  Skewed & 0.4 & 0.33 & 0.99 & 1.00 & 0.99 & 0.96 & 0.99 & 0.99 \\ 
   &  & 0.67 & 1.00 & 1.00 & 0.99 & 0.89 & 0.99 & 0.99 \\ 
   &  & 1.00 & 0.98 & 1.00 & 1.00 & 0.92 & 0.98 & 0.99 \\ 
   & 0.6 & 0.33 & 1.00 & 1.00 & 1.00 & 0.97 & 0.99 & 1.00 \\ 
   &  & 0.67 & 1.00 & 1.00 & 1.00 & 0.96 & 0.99 & 0.99 \\ 
   &  & 1.00 & 0.99 & 1.00 & 1.00 & 0.95 & 1.00 & 1.00 \\ 
   & 0.8 & 0.33 & 1.00 & 1.00 & 1.00 & 0.96 & 0.99 & 0.99 \\ 
   &  & 0.67 & 1.00 & 1.00 & 1.00 & 0.94 & 0.99 & 0.99 \\ 
   &  & 1.00 & 1.00 & 1.00 & 1.00 & 0.97 & 0.99 & 1.00 \\ 
   \hline
\end{tabular}
\endgroup
\caption{Proportion replications with symmetric or skewed response pattern resulting in identical fit across identification constraint methods, using simple starting values.} 
\label{tab:chisq_supp}
\end{table}

\begin{table}
\centering
\begingroup\footnotesize
\begin{tabular}{lllcccccc}
  \hline
  Pattern & Loading Mag. & Prop. Sparse & \multicolumn{6}{c}{Response Options}\\ & & & \multicolumn{3}{c}{3 Indicators} & \multicolumn{3}{c}{6 Indicators}\\ & & &  3 & 4 & 5 &  3 & 4 & 5\\ \hline
Symmetric & 0.4 & 0.00 & 1.00 & 1.00 & 1.00 & 0.99 & 1.00 & 0.99 \\ 
   & 0.6 & 0.00 & 1.00 & 1.00 & 1.00 & 1.00 & 1.00 & 1.00 \\ 
   & 0.8 & 0.00 & 1.00 & 1.00 & 1.00 & 1.00 & 1.00 & 1.00 \\ 
  Skewed & 0.4 & 0.33 & 0.99 & 1.00 & 1.00 & 0.94 & 0.99 & 0.99 \\ 
   &  & 0.67 & 1.00 & 1.00 & 0.99 & 0.91 & 0.99 & 0.99 \\ 
   &  & 1.00 & 0.98 & 0.99 & 1.00 & 0.93 & 0.98 & 1.00 \\ 
   & 0.6 & 0.33 & 1.00 & 1.00 & 1.00 & 0.96 & 1.00 & 1.00 \\ 
   &  & 0.67 & 0.99 & 1.00 & 1.00 & 0.94 & 1.00 & 0.99 \\ 
   &  & 1.00 & 0.98 & 1.00 & 1.00 & 0.95 & 1.00 & 1.00 \\ 
   & 0.8 & 0.33 & 1.00 & 1.00 & 1.00 & 0.98 & 0.99 & 1.00 \\ 
   &  & 0.67 & 1.00 & 1.00 & 1.00 & 0.94 & 0.99 & 0.99 \\ 
   &  & 1.00 & 1.00 & 1.00 & 1.00 & 0.96 & 1.00 & 1.00 \\ 
   \hline
\end{tabular}
\endgroup
\caption{Proportion replications with symmetric or skewed response pattern resulting in identical fit across identification constraint methods, using default starting values.} 
\label{tab:chisq_supp_def}
\end{table}

\begin{table}
\centering
\begingroup\scriptsize
\begin{tabular}{llccccccccc}
  \hline
  Resp.Options & Best Fit & \multicolumn{9}{c}{Proportion Sparse Indicators}\\ & & \multicolumn{3}{c}{Loadings: 0.4} & \multicolumn{3}{c}{Loadings: 0.6} & \multicolumn{3}{c}{Loadings: 0.8}\\ & & 0.33 & 0.67 & 1.00 & 0.33 & 0.67 & 1.00 & 0.33 & 0.67 & 1.00\\ \hline
3 & All & 0.99 & 0.94 & 0.93 & 1.00 & 0.98 & 0.93 & 1.00 & 1.00 & 1.00 \\ 
   & Reference-Indicator & 0.00 & 0.01 & 0.00 & 0.00 & 0.00 & 0.01 & 0.00 & 0.00 & 0.00 \\ 
   & Unit-Variance & 0.00 & 0.01 & 0.02 & 0.00 & 0.00 & 0.01 & 0.00 & 0.00 & 0.00 \\ 
   & Integer & 0.00 & 0.01 & 0.00 & 0.00 & 0.01 & 0.01 & 0.00 & 0.00 & 0.00 \\ 
   & RI \& UV & 0.00 & 0.01 & 0.02 & 0.00 & 0.00 & 0.02 & 0.00 & 0.00 & 0.00 \\ 
   & RI \& I & 0.00 & 0.01 & 0.02 & 0.00 & 0.00 & 0.02 & 0.00 & 0.00 & 0.00 \\ 
   & UV \& I & 0.00 & 0.01 & 0.02 & 0.00 & 0.00 & 0.00 & 0.00 & 0.00 & 0.00 \\ 
  4 & All & 1.00 & 1.00 & 1.00 & 1.00 & 1.00 & 1.00 & 1.00 & 1.00 & 1.00 \\ 
   & Reference-Indicator & 0.00 & 0.00 & 0.00 & 0.00 & 0.00 & 0.00 & 0.00 & 0.00 & 0.00 \\ 
   & Unit-Variance & 0.00 & 0.00 & 0.00 & 0.00 & 0.00 & 0.00 & 0.00 & 0.00 & 0.00 \\ 
   & Integer & 0.00 & 0.00 & 0.00 & 0.00 & 0.00 & 0.00 & 0.00 & 0.00 & 0.00 \\ 
   & RI \& UV & 0.00 & 0.00 & 0.00 & 0.00 & 0.00 & 0.00 & 0.00 & 0.00 & 0.00 \\ 
   & RI \& I & 0.00 & 0.00 & 0.00 & 0.00 & 0.00 & 0.00 & 0.00 & 0.00 & 0.00 \\ 
   & UV \& I & 0.00 & 0.00 & 0.00 & 0.00 & 0.00 & 0.00 & 0.00 & 0.00 & 0.00 \\ 
  5 & All & 1.00 & 1.00 & 1.00 & 1.00 & 1.00 & 1.00 & 1.00 & 1.00 & 1.00 \\ 
   & Reference-Indicator & 0.00 & 0.00 & 0.00 & 0.00 & 0.00 & 0.00 & 0.00 & 0.00 & 0.00 \\ 
   & Unit-Variance & 0.00 & 0.00 & 0.00 & 0.00 & 0.00 & 0.00 & 0.00 & 0.00 & 0.00 \\ 
   & Integer & 0.00 & 0.00 & 0.00 & 0.00 & 0.00 & 0.00 & 0.00 & 0.00 & 0.00 \\ 
   & RI \& UV & 0.00 & 0.00 & 0.00 & 0.00 & 0.00 & 0.00 & 0.00 & 0.00 & 0.00 \\ 
   & RI \& I & 0.00 & 0.00 & 0.00 & 0.00 & 0.00 & 0.00 & 0.00 & 0.00 & 0.00 \\ 
   & UV \& I & 0.00 & 0.00 & 0.00 & 0.00 & 0.00 & 0.00 & 0.00 & 0.00 & 0.00 \\ 
   \hline
\end{tabular}
\endgroup
\caption{Proportion replications with middling response pattern and three indicators per factor resulting in best fit across identification constraint methods, using simple starting values.} 
\label{tab:supp2}
\end{table}

\begin{table}
\centering
\begingroup\scriptsize
\begin{tabular}{llccccccccc}
  \hline
  Resp.Options & Best Fit & \multicolumn{9}{c}{Proportion Sparse Indicators}\\ & & \multicolumn{3}{c}{Loadings: 0.4} & \multicolumn{3}{c}{Loadings: 0.6} & \multicolumn{3}{c}{Loadings: 0.8}\\ & & 0.33 & 0.67 & 1.00 & 0.33 & 0.67 & 1.00 & 0.33 & 0.67 & 1.00\\ \hline
3 & All & 0.99 & 0.98 & 0.92 & 0.99 & 0.98 & 0.96 & 1.00 & 1.00 & 0.99 \\ 
   & Reference-Indicator & 0.00 & 0.00 & 0.00 & 0.00 & 0.00 & 0.01 & 0.00 & 0.00 & 0.00 \\ 
   & Unit-Variance & 0.01 & 0.00 & 0.02 & 0.00 & 0.00 & 0.00 & 0.00 & 0.00 & 0.00 \\ 
   & Integer & 0.00 & 0.01 & 0.02 & 0.00 & 0.00 & 0.01 & 0.00 & 0.00 & 0.00 \\ 
   & RI \& UV & 0.00 & 0.01 & 0.05 & 0.01 & 0.01 & 0.02 & 0.00 & 0.00 & 0.01 \\ 
   & RI \& I & 0.00 & 0.00 & 0.00 & 0.00 & 0.00 & 0.00 & 0.00 & 0.00 & 0.00 \\ 
   & UV \& I & 0.00 & 0.01 & 0.00 & 0.00 & 0.00 & 0.00 & 0.00 & 0.00 & 0.00 \\ 
  4 & All & 1.00 & 1.00 & 1.00 & 1.00 & 1.00 & 1.00 & 1.00 & 1.00 & 1.00 \\ 
   & Reference-Indicator & 0.00 & 0.00 & 0.00 & 0.00 & 0.00 & 0.00 & 0.00 & 0.00 & 0.00 \\ 
   & Unit-Variance & 0.00 & 0.00 & 0.00 & 0.00 & 0.00 & 0.00 & 0.00 & 0.00 & 0.00 \\ 
   & Integer & 0.00 & 0.00 & 0.00 & 0.00 & 0.00 & 0.00 & 0.00 & 0.00 & 0.00 \\ 
   & RI \& UV & 0.00 & 0.00 & 0.00 & 0.00 & 0.00 & 0.00 & 0.00 & 0.00 & 0.00 \\ 
   & RI \& I & 0.00 & 0.00 & 0.00 & 0.00 & 0.00 & 0.00 & 0.00 & 0.00 & 0.00 \\ 
   & UV \& I & 0.00 & 0.00 & 0.00 & 0.00 & 0.00 & 0.00 & 0.00 & 0.00 & 0.00 \\ 
  5 & All & 1.00 & 1.00 & 1.00 & 1.00 & 1.00 & 1.00 & 1.00 & 1.00 & 1.00 \\ 
   & Reference-Indicator & 0.00 & 0.00 & 0.00 & 0.00 & 0.00 & 0.00 & 0.00 & 0.00 & 0.00 \\ 
   & Unit-Variance & 0.00 & 0.00 & 0.00 & 0.00 & 0.00 & 0.00 & 0.00 & 0.00 & 0.00 \\ 
   & Integer & 0.00 & 0.00 & 0.00 & 0.00 & 0.00 & 0.00 & 0.00 & 0.00 & 0.00 \\ 
   & RI \& UV & 0.00 & 0.00 & 0.00 & 0.00 & 0.00 & 0.00 & 0.00 & 0.00 & 0.00 \\ 
   & RI \& I & 0.00 & 0.00 & 0.00 & 0.00 & 0.00 & 0.00 & 0.00 & 0.00 & 0.00 \\ 
   & UV \& I & 0.00 & 0.00 & 0.00 & 0.00 & 0.00 & 0.00 & 0.00 & 0.00 & 0.00 \\ 
   \hline
\end{tabular}
\endgroup
\caption{Proportion replications with middling response pattern and three indicators per factor resulting in best fit across identification constraint methods, using default starting values.} 
\label{tab:supp2_def}
\end{table}

\begin{table}
\centering
\begingroup\scriptsize
\begin{tabular}{llccccccccc}
  \hline
  Resp.Options & Best Fit & \multicolumn{9}{c}{Proportion Sparse Indicators}\\ & & \multicolumn{3}{c}{Loadings: 0.4} & \multicolumn{3}{c}{Loadings: 0.6} & \multicolumn{3}{c}{Loadings: 0.8}\\ & & 0.33 & 0.67 & 1.00 & 0.33 & 0.67 & 1.00 & 0.33 & 0.67 & 1.00\\ \hline
3 & All & 0.87 & 0.65 & 0.51 & 0.92 & 0.75 & 0.56 & 0.99 & 0.97 & 0.95 \\ 
   & Reference-Indicator & 0.01 & 0.04 & 0.18 & 0.00 & 0.02 & 0.06 & 0.00 & 0.00 & 0.01 \\ 
   & Unit-Variance & 0.03 & 0.06 & 0.02 & 0.01 & 0.05 & 0.04 & 0.00 & 0.01 & 0.01 \\ 
   & Integer & 0.03 & 0.07 & 0.04 & 0.00 & 0.04 & 0.07 & 0.01 & 0.00 & 0.00 \\ 
   & RI \& UV & 0.01 & 0.05 & 0.11 & 0.02 & 0.05 & 0.09 & 0.00 & 0.02 & 0.01 \\ 
   & RI \& I & 0.01 & 0.05 & 0.07 & 0.02 & 0.04 & 0.08 & 0.00 & 0.00 & 0.02 \\ 
   & UV \& I & 0.03 & 0.09 & 0.07 & 0.02 & 0.03 & 0.10 & 0.00 & 0.00 & 0.01 \\ 
  4 & All & 0.99 & 0.98 & 0.99 & 1.00 & 0.99 & 1.00 & 1.00 & 1.00 & 1.00 \\ 
   & Reference-Indicator & 0.00 & 0.00 & 0.00 & 0.00 & 0.00 & 0.00 & 0.00 & 0.00 & 0.00 \\ 
   & Unit-Variance & 0.00 & 0.00 & 0.00 & 0.00 & 0.00 & 0.00 & 0.00 & 0.00 & 0.00 \\ 
   & Integer & 0.00 & 0.00 & 0.00 & 0.00 & 0.00 & 0.00 & 0.00 & 0.00 & 0.00 \\ 
   & RI \& UV & 0.01 & 0.02 & 0.01 & 0.00 & 0.00 & 0.00 & 0.00 & 0.00 & 0.00 \\ 
   & RI \& I & 0.00 & 0.00 & 0.00 & 0.00 & 0.00 & 0.00 & 0.00 & 0.00 & 0.00 \\ 
   & UV \& I & 0.00 & 0.00 & 0.00 & 0.00 & 0.00 & 0.00 & 0.00 & 0.00 & 0.00 \\ 
  5 & All & 0.99 & 0.99 & 0.99 & 1.00 & 1.00 & 1.00 & 1.00 & 1.00 & 1.00 \\ 
   & Reference-Indicator & 0.00 & 0.00 & 0.00 & 0.00 & 0.00 & 0.00 & 0.00 & 0.00 & 0.00 \\ 
   & Unit-Variance & 0.00 & 0.00 & 0.00 & 0.00 & 0.00 & 0.00 & 0.00 & 0.00 & 0.00 \\ 
   & Integer & 0.00 & 0.00 & 0.00 & 0.00 & 0.00 & 0.00 & 0.00 & 0.00 & 0.00 \\ 
   & RI \& UV & 0.01 & 0.01 & 0.01 & 0.00 & 0.00 & 0.00 & 0.00 & 0.00 & 0.00 \\ 
   & RI \& I & 0.00 & 0.00 & 0.00 & 0.00 & 0.00 & 0.00 & 0.00 & 0.00 & 0.00 \\ 
   & UV \& I & 0.00 & 0.00 & 0.00 & 0.00 & 0.00 & 0.00 & 0.00 & 0.00 & 0.00 \\ 
   \hline
\end{tabular}
\endgroup
\caption{Proportion replications with middling response pattern and six indicators per factor resulting in best fit across identification constraint methods, using simple starting values.} 
\label{tab:supp3}
\end{table}

\begin{table}
\centering
\begingroup\scriptsize
\begin{tabular}{llccccccccc}
  \hline
  Resp.Options & Best Fit & \multicolumn{9}{c}{Proportion Sparse Indicators}\\ & & \multicolumn{3}{c}{Loadings: 0.4} & \multicolumn{3}{c}{Loadings: 0.6} & \multicolumn{3}{c}{Loadings: 0.8}\\ & & 0.33 & 0.67 & 1.00 & 0.33 & 0.67 & 1.00 & 0.33 & 0.67 & 1.00\\ \hline
3 & All & 0.92 & 0.75 & 0.56 & 0.96 & 0.86 & 0.73 & 0.99 & 0.97 & 0.96 \\ 
   & Reference-Indicator & 0.01 & 0.02 & 0.06 & 0.00 & 0.00 & 0.01 & 0.00 & 0.00 & 0.00 \\ 
   & Unit-Variance & 0.00 & 0.03 & 0.05 & 0.00 & 0.01 & 0.00 & 0.00 & 0.00 & 0.00 \\ 
   & Integer & 0.02 & 0.07 & 0.17 & 0.02 & 0.07 & 0.13 & 0.00 & 0.01 & 0.01 \\ 
   & RI \& UV & 0.02 & 0.05 & 0.10 & 0.01 & 0.05 & 0.07 & 0.01 & 0.02 & 0.02 \\ 
   & RI \& I & 0.00 & 0.04 & 0.04 & 0.00 & 0.01 & 0.03 & 0.00 & 0.00 & 0.00 \\ 
   & UV \& I & 0.03 & 0.04 & 0.04 & 0.00 & 0.01 & 0.02 & 0.00 & 0.00 & 0.00 \\ 
  4 & All & 0.99 & 0.98 & 0.99 & 1.00 & 1.00 & 1.00 & 1.00 & 1.00 & 1.00 \\ 
   & Reference-Indicator & 0.00 & 0.00 & 0.00 & 0.00 & 0.00 & 0.00 & 0.00 & 0.00 & 0.00 \\ 
   & Unit-Variance & 0.00 & 0.00 & 0.00 & 0.00 & 0.00 & 0.00 & 0.00 & 0.00 & 0.00 \\ 
   & Integer & 0.00 & 0.00 & 0.00 & 0.00 & 0.00 & 0.00 & 0.00 & 0.00 & 0.00 \\ 
   & RI \& UV & 0.01 & 0.02 & 0.01 & 0.00 & 0.00 & 0.00 & 0.00 & 0.00 & 0.00 \\ 
   & RI \& I & 0.00 & 0.00 & 0.00 & 0.00 & 0.00 & 0.00 & 0.00 & 0.00 & 0.00 \\ 
   & UV \& I & 0.00 & 0.00 & 0.00 & 0.00 & 0.00 & 0.00 & 0.00 & 0.00 & 0.00 \\ 
  5 & All & 0.98 & 0.99 & 0.99 & 1.00 & 1.00 & 1.00 & 1.00 & 1.00 & 1.00 \\ 
   & Reference-Indicator & 0.00 & 0.00 & 0.00 & 0.00 & 0.00 & 0.00 & 0.00 & 0.00 & 0.00 \\ 
   & Unit-Variance & 0.00 & 0.00 & 0.00 & 0.00 & 0.00 & 0.00 & 0.00 & 0.00 & 0.00 \\ 
   & Integer & 0.00 & 0.00 & 0.00 & 0.00 & 0.00 & 0.00 & 0.00 & 0.00 & 0.00 \\ 
   & RI \& UV & 0.02 & 0.01 & 0.01 & 0.00 & 0.00 & 0.00 & 0.00 & 0.00 & 0.00 \\ 
   & RI \& I & 0.00 & 0.00 & 0.00 & 0.00 & 0.00 & 0.00 & 0.00 & 0.00 & 0.00 \\ 
   & UV \& I & 0.00 & 0.00 & 0.00 & 0.00 & 0.00 & 0.00 & 0.00 & 0.00 & 0.00 \\ 
   \hline
\end{tabular}
\endgroup
\caption{Proportion replications with middling response pattern and six indicators per factor resulting in best fit across identification constraint methods, using simple starting values.} 
\label{tab:supp3_def}
\end{table}

\end{document}